\DeclareMathOperator{\Tr}{Tr}
\DeclareMathOperator{\supp}{Supp}
\newtheorem{Prop}{Proposition}[section]
\newtheorem{Theo}{Theorem}[section]
\newtheorem{Def}{Definition}[section]
\newtheorem{Lem}{Lemma}[section]
\newtheorem{Cor}{Corollary}[section]
\author{Raphael Ducatez\footnote{CEREMADE, UMR CNRS 7534, Université Paris-Dauphine, Place de Lattre de Tassigny, 75775 Paris cedex~16, FRANCE. email : ducatez@ceremade.dauphine.fr}}
\title{Anderson localisation for infinitely many interacting particles in Hartree-Fock theory}                              
\begin{document}

\maketitle

\begin{abstract}
We prove the occurrence of Anderson localisation for a system of infinitely many particles interacting with a short range potential, within the ground state Hartree-Fock approximation. We assume that the particles hop on a discrete lattice and that they are submitted to an external periodic potential which creates a gap in the non-interacting one particle Hamiltonian. We also assume that the interaction is weak enough to preserve a gap. We prove that the mean-field operator has exponentially localised eigenvectors, either on its whole spectrum or at the edges of its bands, depending on the strength of the disorder.  

\bigskip

\noindent Keywords: Anderson localisation, Hartree-Fock theory, multiscale analysis. 

\smallskip

\noindent MSC2010: 47A10, 81Q10, 81V70
\end{abstract}

\tableofcontents 
\section{Introduction}

In 1958, the physicist P.W.~Anderson predicted that, in a random medium, diffusion could disappear and waves stay localised \cite{PhysRev.109.1492}.
Since then, \emph{Anderson localisation} has played an important role to explain several properties of physical systems. 

There are many works on the mathematical side, starting with the simpler one dimension  case in the beginning of the 80s. In dimension 3, the first proof of Anderson localisation was provided by Fröhlich and Spencer \cite{frohlich1983} who developed a method now called \emph{multiscale analysis}. There are now other approaches which include for instance the \emph{fractional moment method} proposed by Aizenman and Molchanov \cite{aizenman1993} or the techniques proposed by Imbrie \cite{2014arXiv1406.2957I}.

All these works are restricted to the one wave problem, which is well adapted to optics and acoustics. In condensed matter, the interaction between the particles is expected to play an important role and in this case one speaks of \emph{many-body localisation}. This question is not fully understood and is very actively studied in physics \cite{2006AnPhy.321.1126B,BasAleAlt-07,2009CMaPh.290..903A,PhysRevLett.95.206603}. A typical system of interest is a crystal composed of quantum electrons and classical nuclei placed on a random perturbation of a perfect lattice \cite{BlaLew-12,lahbabi:hal-00797094}. This is a very complicated system since the Coulomb interaction is long range and there are of the order of $10^{23}$ particles, usually mathematically treated as an infinite number. Due to screening effects, the Coulomb potential is often replaced by an effective short range interaction and this is what we are going to do in this work.

\emph{Finite} interacting systems have been recently considered in several works  
\cite{2009MPAG...12..117C,2009CMaPh.290..903A,KliBootstap,2015RvMaP..2750010F} but infinite systems have not been studied thoroughly. One should however mention a series of works on superfluidity and Bose-Einstein condensation in the Lieb-Liniger 1D Bose gas in a strong random potential \cite{SeiYngZag-12,KonMosSeiYng-14}, and the very recent article of Seiringer and Warzel \cite{2015arXiv151205282S} on the  1D Tonks-Girardeau gas.

It is very hard to deal with the exact interacting Schödinger problem for an infinite system. A useful and widespread approximation is the Hartree-Fock model, where the particles are treated as independent objects but see a field which depends on their own states and is then self-consistently optimised. For random systems this model has been recently introduced by E.~Cancès, S.~Lahbabi and M.~Lewin in \cite{Cances2013241}. The authors were able to construct a solution of the random nonlinear Hartree-Fock equation for an infinite system of fermions with short range interaction, but the phenomenon of Anderson localisation was only investigated numerically in \cite{lahbabi:tel-00873213}. 

In this work, we complete this program in the case of a discrete system with a periodic background. In short, we show that the unique solution of the Hartree-Fock equation
\begin{eqnarray}\label{gammaSystem0}
\begin{cases}H_{min}= -\Delta+V_0+V_\omega+W\ast \gamma(x,x)-W(x-y)\gamma(x,y)\\
\gamma = \mathds{1}_{<\mu}(H_{min}),
\end{cases} 
\end{eqnarray}
provides a mean-field operator $H_{min}$ which has exponentially localised eigenvectors, either on its whole spectrum or at the edges of its bands. Our main assumption is that the periodic potential $V_0$ is sufficiently strong to create a gap for $-\Delta+V_0$ and that the random part $V_\omega$ as well as the interaction $W$ do not alter this gap. We take the chemical potential $\mu$ in this gap. Our argument is to adapt the well known multiscale analysis in order to include the nonlinear terms. 

The paper is organised as follows. In the next section, we properly define the discrete Hartree-Fock model and state our main localisation results. Sections \ref{HamiltonianConstruction} to \ref{SMulti} are devoted to the proof of our results. Then in section \ref{Snumeric} we present some simple one dimensional numerical simulations in order to illustrate our findings. We also look at some cases which are not covered by our theorems and find, for instance, an interesting delocalisation phenomenon at the Fermi level in the absence of a gap, which deserves further investigations.

\section{Model and results}
We consider a system of fermions on a subset $\Lambda$ of the lattice $\mathbb{Z}^d$ and which are submitted to an external potential $V$. Without interactions the system is described by the one-particle Hamiltonian \begin{equation}
H^\Lambda=-\Delta+V .
\end{equation} In the canonical basis $(\delta_x)_{x\in\Lambda}$ of $\ell^2(\Lambda)$, the operator $H^\Lambda$ is defined by 
\begin{equation}\label{Laplacian}
\forall x,y \in \Lambda, \qquad \big(\delta_x ,H^\Lambda \delta_y \big) = \begin{cases}V(x) & \text{ if $x=y$,} \\ 1 & \text{ if $|x-y|=1$,} \\ 0 & \text{otherwise.} \end{cases}
\end{equation}
When $\Lambda$ is the whole lattice $\mathbb{Z}^d$ we will use the simpler notation $H=H^{\mathbb{Z}^d}$. Our theorems will actually hold for finite as well as for infinite domains. Since $H$ will later be perturbed by a non linear term describing the interactions between the particles, we will often call $H$ the linear part.

 The potential $V$ describes a crystal lattice which is randomly perturbed. It is therefore assumed to be of the form \begin{equation}
V=V_0+V_\omega ,
\end{equation} where $V_0$ is a periodic potential (of an arbitrary period) and $V_\omega$ is a random potential. We use the Anderson tight binding model where the value of $V_\omega$ is chosen independently at each site of $\Lambda$ with the same random probability law $\mathbb{P}$: 
\begin{equation}
V_\omega=\sum_{i\in \Lambda} v_i(\omega)\delta_i .
\end{equation}
Here the $v_i$ are iid random variables.
 
Under suitable regularity assumptions on $\mathbb{P}$, it is well known that $H$ displays Anderson localisation. This means that there exist small intervals at the edges of the spectral bands where the spectrum is pure point with exponentially decaying eigenvectors (``Lifshitz tail''). Moreover, if the random potential $V_\omega$ is strong enough (compared to $-\Delta$) then the whole spectrum is pure point with localised eigenvectors. 

We will show that if the interactions between the particles are small enough then the same holds for the interacting model. Before introducing interactions, we first discuss the precise assumptions that we will use for $\mathbb{P}$ and $V_0$. 

\begin{description}
\item[(A1) Regularity of $\mathbb{P}$.] We assume that $\mathbb{P}$ has a bounded support and that it has a density $\rho$ with respect to the Lebesgue measure which is Lipschitz by part. 
\end{description}

We think that this assumption can be weakened in several possible ways but we will not discuss this for the sake of simplicity. With no loss of generality, we can assume that $0\in \supp(\mathbb{P})$.

\begin{description}
\item[(A2) Gap.] We assume that $H$ has a gap $[a,b]$ in its (deterministic) spectrum.
\end{description}

This assumption implies that the periodic potential $V_0$ is not constant and strong enough. Indeed the spectrum of $H$ is almost surely equal to  
\begin{equation*}
\sigma(H )=\sigma(-\Delta+V_0)+\textrm{supp}(\mathbb{P}),
\end{equation*}
  when the support of $\mathbb{P}$ is an interval. In particular it is enough to assume that $-\Delta+V_0$ has a gap of size $G_0 > 2 |\textrm{supp}(\mathbb{P})|$. When the domain $\Lambda$ is large enough then $H^\Lambda$ will have a gap as well. We call the size of the gap $G=b-a$, and choose a chemical potential $\mu$ in the middle $\mu=(a+b)/2$. 

Now we turn to the definition of the interacting model. We assume that the interaction is translation-invariant and decays fast enough.
\begin{description}
\item[(A3) Short range interaction.] We assume that there exists $\nu>0$ such that \begin{equation}
|W(x-y)|\leq Ce^{-\nu |x-y|}
\end{equation}  
\end{description}

In the Hartree-Fock model (see for example \cite{Bach-92,BacLieSol-94,LieSim-77,Lions-87, Solovej-03, BacLieLosSol-94}) the system in the domain $\Lambda$ is completely described by its one-particle density matrix which is an orthogonal projection $\gamma$ on $\ell^2(\Lambda)$. More precisely, for  a finite system, the rank-$N$ projection $ \gamma=\sum_{i=1}^N |\phi_i\rangle \langle\phi_i|$ corresponds to the $N$-particle wave function $\Psi$,
\begin{equation}
\Psi(x_1,x_2...,x_N)=\frac{1}{\sqrt{N!}} \det\big( \phi_i(x_j) \big),
\end{equation}
called a Slater determinant. In a finite domain $\Lambda$, the many-body energy of this state is 
\begin{multline}
\left( \Psi ,\Big[\sum_i^N -\Delta_i+V(x_i) + \frac{1}{2}\sum_{i\neq j} W(x_i-x_j)\Big] \Psi \right) \\ = \Tr(H^\Lambda \gamma)+\frac{1}{2}\sum_{x,y \in \Lambda} W(x-y)\gamma(x,x)\gamma(y,y)-\frac{1}{2}\sum_{x,y\in \Lambda} W(x-y)|\gamma(x,y)|^2.
\end{multline}
We define the effective interaction $A_{\textrm{eff}}$ by 
\begin{eqnarray}
A_{\textrm{eff}}(\gamma)(x,y) = \big[\sum_{n} W(n-y)\gamma(n,n) \delta_{x=y}\big]- W(x-y)\gamma(x,y)
\end{eqnarray}
for any $\gamma$ positive bounded operator. It is the derivative of the interacting part of the energy.

Any minimiser $\gamma$ of the energy, with fixed rank $N$, is a solution of the non-linear equation which involve an effective Hamiltonian.
\begin{eqnarray}\label{gammaSystem}
\begin{cases}H_{min}= -\Delta+V+A_{\textrm{eff}}(\gamma)\\
\gamma = \mathds{1}_{<\mu'}(H_{min}),
\end{cases} 
\end{eqnarray}
In other words $\gamma$ is the projection on the $N$ first eigenfunctions of $H_{\min}$ which itself depends of $\gamma$. To be more precise the equation \eqref{gammaSystem} hold under the condition that $\lambda_{N+1}(H_{min})>\lambda_N(H_{min})$ which is known to be automatically satisfied when $W>0$ \cite{BacLieLosSol-94,BacLieSol-94}.

\paragraph{} We will first show that our problem is well defined for a bounded domain $\Lambda$ as well as for the infinite domain $\Lambda=\mathbb{Z}^d$.

\begin{Theo}[HF ground states for infinitely many particles] \label{TheoExistence} Let $V'$ be a bounded function such that $-\Delta+V'$, defined on a subset $\Lambda$ of $\mathbb{Z}^d$, has a gap $[a',b']$ in its spectrum, with $G'=b'-a'$ and $\mu '=(b'+a')/2$. If $||W||_{\ell^1}< G'/6$, then there exists a unique solution of the system $\gamma =  \mathds{1}_{\leq \mu'}(-\Delta+V'+A_{\textrm{eff}}(\gamma))$. If $\Lambda$ is finite, the trace is preserved:
\begin{equation*}
\Tr(\mathds{1}_{\leq \mu'}(-\Delta+V'+A_{\textrm{eff}}(\gamma))=\Tr(\mathds{1}_{\leq \mu'}(-\Delta+V')) 
\end{equation*}
 and, because of uniqueness, our solution is as well the unique minimiser of the energy among all Hartree-Fock states with fixed particles number $N=\Tr\big(\mathds{1}_{\leq \mu'}(-\Delta+V')\big)$.

\end{Theo}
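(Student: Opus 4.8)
The plan is to solve the self-consistent equation by the Banach fixed point theorem applied to the map $\Phi(\gamma):=\mathds{1}_{\le\mu'}\big({-\Delta}+V'+A_{\textrm{eff}}(\gamma)\big)$ on the complete metric space
\[
\mathcal{X}=\big\{\gamma \text{ bounded self-adjoint on } \ell^2(\Lambda)\ :\ 0\le\gamma\le1\big\},
\]
endowed with the operator norm (a closed subset of $\mathcal{B}(\ell^2(\Lambda))$, hence complete, for $\Lambda$ finite or $\Lambda=\mathbb{Z}^d$). The first step is an a priori control of the gap. If $0\le\gamma\le1$ then $0\le\rho_\gamma(n):=\gamma(n,n)\le1$ and $|\gamma(x,y)|\le\|\gamma\|\le1$, so Young's inequality bounds the direct term $W\ast\rho_\gamma$ in $\ell^\infty$ by $\|W\|_{\ell^1}$ and a Schur test bounds the exchange operator $W(x-y)\gamma(x,y)$ in operator norm by $\|W\|_{\ell^1}$; hence $\|A_{\textrm{eff}}(\gamma)\|\le2\|W\|_{\ell^1}$. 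By min-max the spectrum of $H_\gamma:={-\Delta}+V'+A_{\textrm{eff}}(\gamma)$ avoids the interval $(a'+2\|W\|_{\ell^1},\,b'-2\|W\|_{\ell^1})$, which contains $\mu'=(a'+b')/2$ because $\|W\|_{\ell^1}<G'/6$, and moreover $\mathrm{dist}(\mu',\sigma(H_\gamma))\ge d:=G'/2-2\|W\|_{\ell^1}>G'/6>0$. In particular $\mu'\notin\sigma(H_\gamma)$, so $\Phi$ is well defined on $\mathcal{X}$ and $\Phi(\gamma)$ is an orthogonal projection, hence again in $\mathcal{X}$.

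Next, the contraction estimate. Since $A_{\textrm{eff}}$ is linear, $H_{\gamma_1}-H_{\gamma_2}=A_{\textrm{eff}}(\gamma_1-\gamma_2)$ and the same computations (which used only $|(\gamma_1-\gamma_2)(x,y)|\le\|\gamma_1-\gamma_2\|$) give $\|H_{\gamma_1}-H_{\gamma_2}\|\le2\|W\|_{\ell^1}\|\gamma_1-\gamma_2\|$. Using the representation $\mathds{1}_{\le\mu'}(H_{\gamma_1})-\mathds{1}_{\le\mu'}(H_{\gamma_2})=\frac1{2\pi}\int_{\mathbb{R}}\frac1{\mu'+i\eta-H_{\gamma_1}}\,(H_{\gamma_1}-H_{\gamma_2})\,\frac1{\mu'+i\eta-H_{\gamma_2}}\,d\eta$, together with $\|(\mu'+i\eta-H_{\gamma_j})^{-1}\|\le(d^2+\eta^2)^{-1/2}$ and $\int_{\mathbb{R}}(d^2+\eta^2)^{-1}d\eta=\pi/d$, one gets
\[
\|\Phi(\gamma_1)-\Phi(\gamma_2)\|\le\frac1{2d}\,\|H_{\gamma_1}-H_{\gamma_2}\|\le\frac{\|W\|_{\ell^1}}{G'/2-2\|W\|_{\ell^1}}\,\|\gamma_1-\gamma_2\|,
\]
and the prefactor is $<1$ exactly when $3\|W\|_{\ell^1}<G'/2$, i.e. under our hypothesis $\|W\|_{\ell^1}<G'/6$. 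Thus $\Phi$ has a unique fixed point $\gamma\in\mathcal{X}$; as any solution of the system is an orthogonal projection it lies in $\mathcal{X}$, which gives uniqueness among all solutions, for finite $\Lambda$ as well as for $\Lambda=\mathbb{Z}^d$.

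For $\Lambda$ finite, trace preservation follows from a continuity argument: along $t\mapsto{-\Delta}+V'+tA_{\textrm{eff}}(\gamma)$, $t\in[0,1]$, the perturbation has norm $\le2\|W\|_{\ell^1}<G'/2$, so $\mu'$ stays strictly in a spectral gap and no eigenvalue crosses it; hence $\mathrm{rank}\,\mathds{1}_{\le\mu'}(H_\gamma)=\mathrm{rank}\,\mathds{1}_{\le\mu'}({-\Delta}+V')=:N$. For minimality, write the Hartree–Fock energy as $\mathcal{E}(\gamma')=\Tr(H^\Lambda\gamma')+\tfrac12\langle\gamma',A_{\textrm{eff}}(\gamma')\rangle$ with $\langle A,B\rangle=\Tr(A^*B)$ (so $A_{\textrm{eff}}$ is the self-adjoint linear map giving the derivative of the interaction energy). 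For any rank-$N$ projection $\gamma'$, with $\delta=\gamma'-\gamma$ (hence $\Tr\delta=0$) and using $\gamma=\mathds{1}_{\le\mu'}(H_\gamma)$,
\[
\mathcal{E}(\gamma')-\mathcal{E}(\gamma)=\Tr(H_\gamma\delta)+\tfrac12\langle\delta,A_{\textrm{eff}}(\delta)\rangle .
\]
Expanding $\gamma=\sum_{i\le N}|e_i\rangle\langle e_i|$ in an eigenbasis of $H_\gamma$, setting $n_i=\langle e_i,\gamma'e_i\rangle\in[0,1]$ and $m=\sum_{i\le N}(1-n_i)=\sum_{i>N}n_i\ge0$, a Ky Fan type estimate using $\lambda_N\le\mu'-d$, $\lambda_{N+1}\ge\mu'+d$ gives $\Tr(H_\gamma\delta)\ge2dm$, while $\Tr(\delta^2)=2m$ and the Hilbert–Schmidt bound $\|A_{\textrm{eff}}(\delta)\|_{\mathrm{HS}}\le2\|W\|_{\ell^1}\|\delta\|_{\mathrm{HS}}$ (again Young plus an $\ell^2$ Schur test) yield $|\langle\delta,A_{\textrm{eff}}(\delta)\rangle|\le4\|W\|_{\ell^1}m$. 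Therefore $\mathcal{E}(\gamma')-\mathcal{E}(\gamma)\ge2\big(G'/2-3\|W\|_{\ell^1}\big)m\ge0$, with equality only if $m=0$, which forces $\gamma'=\gamma$; in particular the minimiser exists and equals $\gamma$.

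The delicate point is the contraction step: one must pair the sharp resolvent (contour) bound $\|\Phi(\gamma_1)-\Phi(\gamma_2)\|\le\frac1{2d}\|H_{\gamma_1}-H_{\gamma_2}\|$ with the a priori lower bound $\mathrm{dist}(\mu',\sigma(H_\gamma))\ge G'/2-2\|W\|_{\ell^1}$, so that the resulting Lipschitz constant $\|W\|_{\ell^1}/(G'/2-2\|W\|_{\ell^1})$ drops below $1$ precisely under $\|W\|_{\ell^1}<G'/6$; the analogous balance $G'/2-3\|W\|_{\ell^1}>0$ reappears in the minimality argument. Everything else reduces to bookkeeping with Young's inequality, Schur tests and min-max.
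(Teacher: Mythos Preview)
Your proof is correct and the contraction argument is essentially identical to the paper's: both bound $\|A_{\textrm{eff}}(\gamma)\|\le 2\|W\|_{\ell^1}$, deform the Cauchy contour to the vertical line $\mu'+i\mathbb{R}$, and arrive at the Lipschitz constant $\|W\|_{\ell^1}/(G'/2-2\|W\|_{\ell^1})<1$. The only difference is that you work on $\{0\le\gamma\le1\}$ while the paper iterates on orthogonal projectors, which is immaterial. You also supply explicit proofs of the trace preservation (via the homotopy $t\mapsto -\Delta+V'+tA_{\textrm{eff}}(\gamma)$) and of the energy minimality (via the second-order expansion and the Hilbert--Schmidt bound on $A_{\textrm{eff}}$), both of which the paper asserts but does not prove; these additions are correct and welcome.
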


This result will be shown in Section \ref{HamiltonianConstruction}. The continuous case is more complicated and has been studied in \cite{lahbabi:hal-00797094}. When $\Lambda$ is a large cube and $V'=V=V_0+V_\omega$ then the number of particles $\Tr(\mathds{1}_{\leq \mu'}(-\Delta+V'))$ is proportional to the volume, which is the interesting physical case. For $\Lambda=\mathbb{Z}^d$, $H_{min}(\omega)$ exists and its spectrum, as in the linear model, does not depend on $\omega$ almost surely. The reason is that $H_{min}$ is stationary with respect to space translations, which follows from the uniqueness in the theorem. 

The main result of our paper is the following:

\begin{Theo}[Anderson localisation in Hartree-Fock theory]\label{TheoMain}
Under the assumptions (A1), (A2) and (A3), then the following holds:
\begin{enumerate} 
 \item There is $\epsilon$ such that if $||W||_{\ell^1}<\epsilon$ then there are small intervals at the edges of the bands of the spectrum of $H_{min}$, where the spectrum is pure point with exponentially decaying eigenvectors.
 \item In presence of strong disorder, meaning that $||\rho||_{\infty}+||\rho'||_{\infty}$ small enough, and if $||W||_{\ell^1}<G/6$, then the whole spectrum of $H_{min}$ is pure point and its eigenvectors are exponentially decaying in space.
\end{enumerate}
\end{Theo}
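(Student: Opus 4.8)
The plan is to run a von Dreifus--Klein multiscale analysis for $H_{min}$, treating the nonlinear term $A_{\textrm{eff}}(\gamma)$ as an exponentially short-range perturbation of the Anderson model $-\Delta+V_0+V_\omega$. The starting point is that throughout the regime $\|W\|_{\ell^1}<G/6$ of Theorem~\ref{TheoExistence} the operator $H_{min}$ keeps a spectral gap around $\mu$: since $|\gamma(n,n)|\le 1$ and, by a Schur test on the exchange kernel, the operator norm of $A_{\textrm{eff}}(\gamma)$ is at most $2\|W\|_{\ell^1}$, the gap of $H_{min}$ has size at least $G-4\|W\|_{\ell^1}$, which is positive, and $\mu$ stays in it. Because $\mu$ sits in this gap, $\gamma=\frac{1}{2\pi i}\oint_\Gamma(z-H_{min})^{-1}\,dz$ for a contour $\Gamma$ encircling the part of $\sigma(H_{min})$ below $\mu$ at a distance from $\sigma(H_{min})$ bounded below in terms of $G$, and a Combes--Thomas estimate gives a \emph{deterministic} bound $|\gamma(x,y)|\le C e^{-\eta|x-y|}$, uniform in $\omega$ and in the domain; together with (A3) this yields $|A_{\textrm{eff}}(\gamma)(x,y)|\le C' e^{-\eta'|x-y|}$. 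The same computation on a cube $\Lambda_L$ shows that the intrinsic finite-volume operator $H^{\Lambda_L}_{min}$ --- the solution of \eqref{gammaSystem} posed on $\Lambda_L$, which depends only on $\{v_i:i\in\Lambda_L\}$ --- coincides with the restriction of $H_{min}$ to the bulk of $\Lambda_L$ up to an error that is exponentially small in the distance to $\partial\Lambda_L$. This makes finite-volume events on disjoint cubes independent and will let us transfer finite-volume information to $H_{min}$ at the end.

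Next I would establish the Wegner estimate. The fixed-point map $F(V,\gamma)=\gamma-\mathds{1}_{\le\mu}(-\Delta+V+A_{\textrm{eff}}(\gamma))$ is real-analytic near a gapped solution, and $\partial_\gamma F=\textrm{Id}-\mathcal L_{H_{min}}\circ A_{\textrm{eff}}$, with $\mathcal L_H(K)=\frac{1}{2\pi i}\oint_\Gamma(z-H)^{-1}K(z-H)^{-1}dz$, is invertible precisely under the contraction condition $\|W\|_{\ell^1}<G/6$ already used for Theorem~\ref{TheoExistence}. Hence $\gamma$ is a smooth function of each $v_i$, with $\partial_{v_i}\gamma=(\textrm{Id}-\mathcal L_{H_{min}}\circ A_{\textrm{eff}})^{-1}\mathcal L_{H_{min}}(|\delta_i\rangle\langle\delta_i|)$ decaying exponentially away from $i$. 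The monotonicity needed for Wegner then follows from an exact cancellation that survives for all $\|W\|_{\ell^1}<G/6$: shifting $V$ by a constant $c$ and $\mu$ by $-c$ maps the solution to itself up to the additive shift $c$, and $\gamma$ is constant in $\mu$ throughout the gap, so $\sum_i\partial_{v_i}\gamma=0$; therefore $\sum_i\partial_{v_i}H_{min}=\textrm{Id}+A_{\textrm{eff}}\big(\sum_i\partial_{v_i}\gamma\big)=\textrm{Id}$ and, by Hellmann--Feynman, $\sum_i\partial_{v_i}\lambda_n(H^{\Lambda_L}_{min})=1$. Combined with the Lipschitz-by-parts density (A1), the usual spectral-averaging argument gives $\mathbb{P}\big(\textrm{dist}(E,\sigma(H^{\Lambda_L}_{min}))<\epsilon\big)\le C(\|\rho\|_\infty+\|\rho'\|_\infty)\,\epsilon\,|\Lambda_L|^2$ for all $E$.

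For the initial length scale the two statements require different inputs. In case~2, strong disorder suffices directly: writing $H^{\Lambda_L}_{min}=D_\omega+T$ with $\|T\|=\|{-\Delta}+A_{\textrm{eff}}(\gamma)\|\le 2d+2\|W\|_{\ell^1}$ bounded, a Neumann-series / Combes--Thomas argument shows that once $\|\rho\|_\infty+\|\rho'\|_\infty$ is small enough the resolvent $(H^{\Lambda_{L_0}}_{min}-E)^{-1}$ already decays exponentially on a fixed scale $L_0$ with high probability, uniformly in $E$. In case~1 I would instead prove a Lifshitz-tails estimate near a band edge $E_{\textrm{edge}}$ of the deterministic spectrum of $H_{min}$: a large-deviation bound forces $\sigma(H^{\Lambda_{L_0}}_{min})$ to avoid a neighbourhood of $E_{\textrm{edge}}$ with high probability, using that on the favourable event (all $v_i$ in the cube near their extreme value) the relevant edge of $\sigma(H^{\Lambda_{L_0}}_{min})$ differs from that of $-\Delta+V_0$ on the cube by at most $2\|W\|_{\ell^1}$, which is made negligible by the smallness hypothesis $\|W\|_{\ell^1}<\epsilon$ of statement~1 --- this is where that hypothesis is genuinely needed.

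The multiscale induction then follows the standard template. The deterministic step is geometric resolvent patching: $(H^{\Lambda_{L'}}_{min}-E)^{-1}$ is assembled from the resolvents of overlapping good sub-cubes $H^{\Lambda_L(y)}_{min}$ via the resolvent identity, the new feature being that the restriction of $H^{\Lambda_{L'}}_{min}$ to a sub-cube differs from $H^{\Lambda_L(y)}_{min}$ by the usual boundary hopping plus the term $A_{\textrm{eff}}(\gamma^{\Lambda_{L'}})-A_{\textrm{eff}}(\gamma^{\Lambda_L(y)})$, which by the Combes--Thomas bounds of the first paragraph is exponentially small in the distance to $\partial\Lambda_L(y)$ and hence absorbed into the boundary term. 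The probabilistic step uses that ``$\Lambda_L(y)$ is good'' depends only on $\{v_i:i\in\Lambda_L(y)\}$, so events on well-separated cubes are independent, together with the Wegner bound to exclude resonances; the scale cascade $L_{k+1}\sim L_k^\alpha$ is then unchanged. From good cubes at all scales one deduces pure point spectrum with exponentially decaying eigenfunctions in the relevant energy region by the standard argument relating generalised eigenfunctions of $H_{min}$ to finite-volume resolvents, the bulk approximation $H^{\Lambda_L(y)}_{min}\approx H_{min}$ being exactly what converts the finite-volume statements into a statement about $H_{min}$ itself. I expect the main obstacle to be not any individual step but the bookkeeping of the exponentially small self-consistency mismatch $A_{\textrm{eff}}(\gamma^{\Lambda'})-A_{\textrm{eff}}(\gamma^{\Lambda})$: one must verify that it does not accumulate along the cascade $L_k^\alpha$ and that the mild non-locality of $A_{\textrm{eff}}(\gamma)$ never spoils the independence at the cube separations actually used.
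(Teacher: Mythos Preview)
Your overall architecture---locality of $\gamma_{min}$ via Combes--Thomas across the gap, a Wegner estimate, and a von Dreifus--Klein multiscale with a small self-consistency correction propagated through the geometric resolvent identity---is exactly the route the paper takes. Your choice of finite-volume object differs slightly: you use the intrinsic solution $H_{min}^{\Lambda_L}$ of the Hartree--Fock system on $\Lambda_L$, whereas the paper restricts to $\Lambda_L(n)$ the infinite-volume $H_{min}$ computed from a potential that has been frozen to a constant outside the doubled box $\Lambda_{2L}(n)$. Either works; with the paper's choice the operator depends on the randomness in $\Lambda_{2L}(n)$, which is why the multiscale there asks for cubes separated by $2L_k$. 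Your identification of the band-edge input (comparing $\sigma(H^{\Lambda}_{min})$ to $\sigma(H^\Lambda)$ within $2\|W\|_{\ell^1}$ and inheriting the linear Lifshitz tail) is also what the paper does.

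The one place where your sketch has a genuine gap is the Wegner estimate. Your key identity $\sum_i\partial_{v_i}\gamma=0$, hence $\sum_i\partial_{v_i}H_{min}^{\Lambda}=\mathrm{Id}$ and $\sum_i\partial_{v_i}\lambda_n=1$, is correct and is precisely the observation the paper exploits. But this alone does \emph{not} feed the ``usual spectral-averaging argument'': Kotani--Simon/Combes--Hislop averaging needs $\partial_{v_i}H\ge 0$ for each site, and here $\partial_{v_i}H_{min}=|\delta_i\rangle\langle\delta_i|+A_{\textrm{eff}}(\partial_{v_i}\gamma)$ is sign-indefinite. Only the \emph{diagonal} vector field $\sum_i\partial_{v_i}$ moves eigenvalues monotonically, so one is forced to integrate along that flow, i.e.\ to change variables to the empirical mean $\alpha=|\Lambda|^{-1}\sum_i v_i$ and condition on the differences. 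The conditional density of $\alpha$ is a product of $|\Lambda|$ factors $\rho(\tilde v_i+\alpha)$ and is not bounded by anything like $\|\rho\|_\infty$; controlling it is the actual work. This is exactly what the paper does in Theorem~\ref{TheoWegner}, at the cost of obtaining only a $\sqrt{\epsilon}$ bound (with a $\|\rho'\|_\infty$ contribution coming from controlling the logarithmic derivative of the product), not the linear $\epsilon$ you claim. The $\sqrt{\epsilon}$ is harmless for the multiscale, but your sentence ``the usual spectral-averaging argument gives $\ldots\le C(\|\rho\|_\infty+\|\rho'\|_\infty)\,\epsilon\,|\Lambda_L|^2$'' is not justified as stated.
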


This theorem will be shown by using the multiscale analysis on $H_{min}$. We will proceed in three step.

\paragraph{Step 1.} We show that because of the gap, $\gamma_{min}$ only depends locally on the potential. From a practical point of view, in order to know how $\gamma_{min}$ looks like in a box of size $L$ after solving the minimising problem for $\mathbb{Z}^d$, solving the minimising problem for a box of size $2L$ will be enough to have a very good approximation. In a more mathematical formulation:

\begin{Theo}[Locality]  \label{TheoLocal} We assume that (A2) and (A3) hold. There exist $a>0$ , $\nu>0$ and $C>0$ such that, if $||W||_{\ell^1} < a G $, then for any modification of the potential $V_\omega \rightarrow V_\omega+\delta V$ so that $V_\omega+\delta V \in \text{supp}(\mathbb{P})$, the change induced to the minimising projector given by Theorem \ref{TheoExistence} satisfies 
\begin{equation} \label{ELocal}
\sup_{y_2\in\Lambda} \big|\gamma_{min}(V+\delta V)(y_1, y_2)-\gamma_{min}(V)(y_1,y_2)\big| \leq  ||\delta V|| C e^{-\nu d(y_1,\textrm{supp}(\delta V))}.
\end{equation}
Here $d(y_1,\textrm{supp}(\delta V))$ is the distance between $y_1$ and the support of the perturbation $\delta V$. The constants do not depend on the choice of the domain $\Lambda$.
\end{Theo}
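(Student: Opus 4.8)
The plan is to represent $\gamma_{min}$ as a Cauchy integral of the resolvent over a contour lying in the gap, use the Combes--Thomas estimate for the exponential off-diagonal decay of that resolvent, and then recast the difference $\delta\gamma:=\gamma_{min}(V+\delta V)-\gamma_{min}(V)$ as the solution of a \emph{linear} fixed-point equation which is a contraction in an exponentially weighted space. Write $\gamma=\gamma_{min}(V)$, $\gamma'=\gamma_{min}(V+\delta V)$, $H_{min}=-\Delta+V+A_{\textrm{eff}}(\gamma)$ and $H'_{min}=-\Delta+V+\delta V+A_{\textrm{eff}}(\gamma')$. Since $\sigma(-\Delta+V_0+w)\subseteq\sigma(H)$ for every potential $w$ taking values in $\textrm{supp}(\mathbb{P})$, the linear parts of $H_{min}$ and $H'_{min}$ keep the gap $[a,b]$, while a Schur bound gives $\|A_{\textrm{eff}}(\eta)\|\le 2\|W\|_{\ell^1}\le 2aG$ for any orthogonal projection $\eta$. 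Hence, for $a$ small enough, $\mu$ lies at distance $\gtrsim G$ from $\sigma(H_{min})\cup\sigma(H'_{min})$, and one fixes a single finite contour $\mathcal{C}$ encircling the part of each spectrum below $\mu$ and staying at distance $\gtrsim G$ from both, so that $\gamma=\frac{1}{2\pi i}\oint_{\mathcal{C}}(z-H_{min})^{-1}\,dz$ and $\gamma'=\frac{1}{2\pi i}\oint_{\mathcal{C}}(z-H'_{min})^{-1}\,dz$.

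The Combes--Thomas estimate then yields, uniformly in $z\in\mathcal{C}$, in $\omega$ and in $\Lambda$, bounds $|(z-H_{min})^{-1}(x,y)|\le Ce^{-\nu'|x-y|}$ and the analogous one for $H'_{min}$, with $C,\nu'$ depending only on $G$ and the dimension $d$. Since $A_{\textrm{eff}}$ is linear and $H'_{min}-H_{min}=\delta V+A_{\textrm{eff}}(\delta\gamma)$, subtracting the two contour representations and applying the second resolvent identity gives
\begin{equation*}
\delta\gamma=\frac{1}{2\pi i}\oint_{\mathcal{C}}(z-H'_{min})^{-1}\bigl(\delta V+A_{\textrm{eff}}(\delta\gamma)\bigr)(z-H_{min})^{-1}\,dz=:T(\delta V)+\Psi(\delta\gamma).
\end{equation*}
As $\delta V$ is a diagonal potential supported on $S:=\textrm{supp}(\delta V)$ and the right resolvent has summable rows, the inhomogeneous term is controlled directly: $\sup_{y_2}|T(\delta V)(y_1,y_2)|\le C\|\delta V\|\,e^{-\nu d(y_1,S)}$ for any $\nu<\nu'$.

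It remains to run a Banach fixed-point argument in the space $X_\nu$ of kernels $K$ with $\|K\|_\nu:=\sup_x e^{\nu d(x,S)}\sup_y|K(x,y)|<\infty$, choosing $\nu<\min(\nu',\nu_W)$ where $\nu_W$ is the decay rate of $W$ from (A3). Using that the exchange part of $A_{\textrm{eff}}(K)$ is a Schur product with the matrix $(W(x-y))$ and the direct part is the diagonal operator with entries $\sum_n W(n-y)K(n,n)$, together with the inequality $d(v,S)\ge d(y_1,S)-|y_1-v|$ and the Combes--Thomas decay, the conjugation of $A_{\textrm{eff}}(K)$ by the two resolvents turns a $\|\cdot\|_\nu$ bound for $K$ into a $\|\cdot\|_\nu$ bound for $\Psi(K)$ with constant $\le C\|W\|_{\ell^1}$, hence $<1$ for $a$ small. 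The same estimate with $\nu=0$ shows $\Psi$ is also a contraction on the space of bounded kernels; since the genuine $\delta\gamma$ satisfies $|\delta\gamma(x,y)|\le 2$ and solves the equation, it is the unique fixed point in that larger space, hence coincides with $(I-\Psi)^{-1}T(\delta V)\in X_\nu$. This yields $\|\delta\gamma\|_\nu\le\|T(\delta V)\|_\nu/(1-C\|W\|_{\ell^1})\le C'\|\delta V\|$, which is exactly \eqref{ELocal}; uniformity in $\Lambda$ comes from the domain-independence of the Combes--Thomas constants and of Theorem \ref{TheoExistence}.

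The inputs from Theorem \ref{TheoExistence} and the Combes--Thomas bound are routine; the real work is the weighted bookkeeping of the last step, where the three exponential rates must be ordered correctly and one must check that \emph{both} pieces of $A_{\textrm{eff}}$, once conjugated by resolvents, reproduce the weight $e^{-\nu d(\cdot,S)}$ at a cost proportional only to $\|W\|_{\ell^1}$. The direct (Hartree) term, which couples the diagonal of $K$ at arbitrarily distant sites, is the delicate point, and it is precisely where the exponential decay of $W$ in (A3) enters.
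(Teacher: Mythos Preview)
Your proof is correct and rests on the same three ingredients as the paper's: the contour-integral representation of the spectral projector across the gap, the Combes--Thomas exponential decay of the resolvent, and a contraction argument. The packaging differs. The paper iterates the nonlinear map $F_{V+\delta V}$ starting from $\gamma_{min}(V)$ and tracks the spatial decay by hand through the telescoping sum $\sum_{n\ge0}\bigl[F_{V+\delta V}^{n+1}(\gamma)-F_{V+\delta V}^n(\gamma)\bigr]$, with the effective Hamiltonian changing at every step; you instead observe that the second resolvent identity, written with \emph{both} endpoint Hamiltonians $H_{min}$ and $H'_{min}$, produces an exactly linear equation $\delta\gamma=T(\delta V)+\Psi(\delta\gamma)$ that can be solved in one stroke in the weighted space $X_\nu$. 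Your route is tidier because $\Psi$ is fixed once and for all, and the uniqueness step (contraction already on $X_0$, hence the genuine bounded $\delta\gamma$ lies in $X_\nu$) is clean; the price is that you must invoke the existence of $\gamma_{min}(V+\delta V)$ from Theorem~\ref{TheoExistence} before the equation even makes sense, whereas the paper's iteration reconstructs it. One small imprecision worth noting: for the Hartree piece the contraction constant of $\Psi$ on $X_\nu$ involves the weighted sum $\sum_m|W(m)|e^{\nu|m|}$ rather than $\|W\|_{\ell^1}$ itself; under (A3) with $\nu<\nu_W$ this is still driven to zero by small $\|W\|_{\ell^1}$ (interpolate $|W(m)|\le\|W\|_{\ell^1}^{\theta}(C_0e^{-\nu_W|m|})^{1-\theta}$ for $0<\theta<1-\nu/\nu_W$), so the conclusion stands --- and the paper's ``with $W$ small enough'' in its Step~1 glosses over exactly the same point.
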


This theorem will be proved in Section \ref{SLocal}.

\paragraph{Step 2.} We then show in section \ref{SWegner} a kind of Wegner estimate. We denote by : $(H_{min})_{|\Lambda}$ the submatrices $\mathds{1}_{\Lambda}H_{min}\mathds{1}_{\Lambda}$, restricted to $\ell^2(\Lambda)$ where $\Lambda$ is a finite cube in $\mathbb{Z}^d$.
\begin{Theo}[Wegner estimate] \label{TheoWegner}
Assuming (A1), (A2), (A3), there exists $a>0$ such that if $||W||_{\ell^1}<aG$ then there exists a constant $C$ so that 
\begin{equation}\label{EWegner}
\mathbb{P}\big[d(\sigma[(H_{min})_{|\Lambda}], \lambda)<\epsilon\big] \leq C |\Lambda| \sqrt{\epsilon}\Big(|\supp(\rho)|^{-1/2}+||\rho||_\infty | \supp(\rho)|^{1/2}+ ||\rho'||_\infty |\supp(\rho)|^{3/2}\Big),
\end{equation}
for any $\lambda\in \mathbb{C}$.  
\end{Theo}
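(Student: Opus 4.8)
The plan is to prove a Wegner estimate by controlling the expected number of eigenvalues of $(H_{min})_{|\Lambda}$ in a short energy window. One first reduces to real $\lambda$: the event is empty when $|\mathrm{Re}\,\lambda|$ is irrelevant only if $|\mathrm{Im}\,\lambda|\ge\epsilon$ (then the probability is $0$), and otherwise $d(\sigma[(H_{min})_{|\Lambda}],\lambda)<\epsilon$ forces $d(\sigma[(H_{min})_{|\Lambda}],\mathrm{Re}\,\lambda)<\epsilon$; then Chebyshev's inequality gives $\mathbb{P}[d(\sigma[(H_{min})_{|\Lambda}],\lambda)<\epsilon]\le\mathbb{E}\big[\Tr\,\mathds{1}_{[\lambda-\epsilon,\lambda+\epsilon]}((H_{min})_{|\Lambda})\big]$, so it is enough to bound the expected spectral content of $(H_{min})_{|\Lambda}$ in $I=[\lambda-\epsilon,\lambda+\epsilon]$. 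Were $H_{min}$ the linear Anderson operator, this would be the classical spectral averaging of Combes and Hislop, exploiting that $\partial_{v_x}$ produces the nonnegative rank-one operator $P_x=|\delta_x\rangle\langle\delta_x|$; the whole issue is the extra dependence of $H_{min}$ on $\omega$ through the nonlinear term $A_{\textrm{eff}}(\gamma)$.

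This is exactly where the locality estimate of Theorem~\ref{TheoLocal} is used. It implies that $\omega\mapsto(H_{min})_{|\Lambda}$ is Lipschitz in operator norm and, differentiating the system \eqref{gammaSystem} almost everywhere, $\partial_{v_x}(H_{min})_{|\Lambda}=P_x+R_x$ with $R_x=A_{\textrm{eff}}(\partial_{v_x}\gamma_{min})$; by \eqref{ELocal} and (A3) the matrix elements of $R_x$ decay exponentially away from $x$ and $\|R_x\|\le C\|W\|_{\ell^1}/G\le\tfrac12$ once $\|W\|_{\ell^1}<aG$ with $a$ small. So monotonicity in a single disorder variable is destroyed, but monotonicity in the \emph{collective} direction survives: a uniform shift $v_x\mapsto v_x+t$ for all $x\in\Lambda$ leaves $\gamma_{min}$ unchanged away from $\partial\Lambda$ — on the whole lattice such a shift merely translates $H_{min}$, by uniqueness in Theorem~\ref{TheoExistence} and the gap — hence it moves $(H_{min})_{|\Lambda}$ by $t\mathds{1}_\Lambda+tS+o(t)$ with $\|S\|\le C\|W\|_{\ell^1}/G$, and every eigenvalue branch along this flow is strictly increasing with speed in $[\tfrac12,\tfrac32]$, uniformly in $\Lambda$ and in the transverse disorder.

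One then runs the spectral averaging along this collective direction: writing $v_x=d_x+s$ with $d_x=v_x-v_{x_0}$ and $s=v_{x_0}$ for a reference site $x_0$, the monotonicity gives $\int \Tr\,\mathds{1}_I((H_{min})_{|\Lambda})\,\mathrm{d}s=O(|\Lambda|\epsilon)$ for each fixed $d$, and it remains to integrate this against $\rho(s)\prod_{x\ne x_0}\rho(d_x+s)$ and $\prod_{x\ne x_0}\mathrm{d}d_x$. Because the monotone direction is not a coordinate axis, one cannot simply peel off a single factor $\|\rho\|_\infty$ as in the sign-definite case; instead one integrates by parts in the $v_x$, which produces a boundary term on $\partial(\supp\rho)$ controlled by $\|\rho\|_\infty$ and a bulk term controlled by $\|\rho'\|_\infty$, while a Cauchy--Schwarz / $L^2$ estimate on the resulting density (forced by the absence of per-site monotonicity), together with the fact that the admissible range of $s$ has length $\le|\supp\rho|$, accounts for the factor $\sqrt\epsilon$ in place of $\epsilon$ and for the half-integer powers of $|\supp\rho|$ in \eqref{EWegner}. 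The exponential localization of the $R_x$ is used throughout to keep all constants independent of $\Lambda$ and the estimate linear in $|\Lambda|$.

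The step I expect to be the main obstacle is precisely this: the nonlinearity genuinely destroys monotonicity in each individual random variable, so the classical rank-one spectral averaging is unavailable and one is pushed into a ``non-monotone'' Wegner estimate. This both weakens the bound from order $\epsilon$ to order $\sqrt\epsilon$ and, more delicately, makes it non-trivial to retain the correct volume factor $|\Lambda|$ rather than $|\Lambda|^2$. Handling it cleanly relies on the full strength of Theorem~\ref{TheoLocal} — so that the collective generator $\mathds{1}_\Lambda+S$ really is a small perturbation of the identity, uniformly in $\Lambda$ — and on the a priori bounds of Theorem~\ref{TheoExistence}.
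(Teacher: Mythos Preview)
Your proposal is on the same track as the paper: the key idea in both is that a \emph{uniform} shift of all disorder variables near $\Lambda$ moves every eigenvalue of $(H_{min})_{|\Lambda}$ with speed at least $1/2$, because a global shift leaves $\gamma_{min}$ invariant (gap) and restricting the shift to $\Lambda_{2L}$ perturbs things inside $\Lambda_L$ only by $O(e^{-\nu L})$ (Theorem~\ref{TheoLocal}); the spectral averaging is then carried out along this collective direction.

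The technical execution in the paper differs from your sketch. The paper takes the empirical mean $\alpha=|\Lambda_{2L}|^{-1}\sum_{x\in\Lambda_{2L}}V_\omega(x)$ as the collective variable (rather than a single $v_{x_0}$), and instead of your ``integration by parts $+$ Cauchy--Schwarz'' it bounds the conditional density $\xi_{\tilde V}(\alpha)=\prod_x\rho(\tilde V_x+\alpha)\big/\int\prod_x\rho(\tilde V_x+\alpha')\,d\alpha'$ directly: it restricts to an event $\bigcap_x O_x$ on which each $\rho(V_\omega(x))>f(\epsilon)$ in a $\delta(\epsilon)$-neighbourhood, so that the log-derivative of the product is bounded by $|\Lambda|\,\|\rho'\|_\infty/f(\epsilon)$, and treats the complement $\bigcup_x O_x^c$ separately via $\|\rho\|_\infty$ and $\|\rho'\|_\infty$. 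The $\sqrt\epsilon$ and the half-integer powers of $|\supp\rho|$ then fall out from optimizing $f(\epsilon),\delta(\epsilon)\sim\sqrt\epsilon$. Your integration-by-parts route is a legitimate alternative of the same flavour (both are devices for handling the fact that the monotone direction is diagonal and the product density along it can concentrate), and both lead to the same order of bound; the paper's good-set decomposition is perhaps the more elementary of the two.
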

This result says that there is no arbitrary small interval where we can find for sure an eigenvalue of $(H_{min})_{|\Lambda}$.

\paragraph{Step 3.} In the last step we perform the multiscale analysis. This is explained in Section \ref{SMulti}.

\section{Construction of the mean-field Hamiltonian: proof of Theorem \ref{TheoExistence}}

\label{HamiltonianConstruction}

The aim of this  section is to prove that our operator $H_{min}$ is well defined. We will show that finding the unique solution $\gamma_{min}$ in presence of a gap can be done using a fixed point lemma. 

In this subsection we solve our system 
\begin{equation} \label{Esyst}
\begin{cases}H_{min}= -\Delta+V'+A_{\textrm{eff}}(\gamma)\\
\gamma = \mathds{1}_{<\mu'}(H_{min})
\end{cases} 
\end{equation}
under the assumption that $\mu'$ is inside a gap $[a',b']$ in the spectrum, where $\mu'=(a'+b')/2$. We introduce $G'=b'-a'$.

Let $\mathcal{C}$ be a loop in the complex plane surrounding a part $I$ of the spectrum of a operator $H$. We make the assumption that the loop does not cross the spectrum (which implies that there exist gaps above and below $I$). Then

\begin{equation}
\mathds{1}_{I}(H) = \frac{1}{2i\pi}\oint_{\mathcal{C}} (H-z)^{-1} dz
\end{equation} is the projector on the spectral subspace associated with $I$.

Let us define an application that gives this projector.
\begin{Def}[Fixed point map] Let $\mathcal{C}$ be a fixed loop in the complex plane.  For all $\gamma$ orthogonal projector and $H_{\textrm{eff}}(\gamma)=-\Delta+V'+A_{\textrm{eff}}(\gamma)$ , we define 
\begin{equation}
F(\gamma) = \frac{1}{2i\pi}\oint_{\mathcal{C}} (H_{\textrm{eff}}(\gamma)-z)^{-1} dz. 
\end{equation}
\end{Def}
This application enable us to reformulate our system \eqref{Esyst} as
\begin{equation}\label{EFixePoint}
F(\gamma)=\gamma,
\end{equation}
where the loop $\mathcal{C}$ crosses the real axis at $\mu'$. Recall that $H_{min}$ is bounded so that we can always enclose all of its spectrum below $\mu'$. Because $||A_{\textrm{eff}}||$ is bounded by $2||W||_{\ell^1}$, we always have
\begin{equation}
d\Big(\sigma \big( H_{\textrm{eff}}(\gamma)\big),\mu'\Big)\geq \frac{b'-a'}{2}-2||W||_{\ell^1}.
\end{equation}
So $(b'-a')/2 > 2||W||_{\ell^1}$ is enough to ensure that $\mathcal{C}$ never crosses the spectrum of $H_{\textrm{eff}}$ and $F$ is always well defined.
In order to solve \eqref{EFixePoint} we will show that if $||W||_{\ell^1}$ is small enough then $F$ is a contraction. 

\begin{proof}[Proof of Theorem \ref{TheoExistence}]
  Let $\gamma_1$ and $\gamma_2$ be two orthogonal projectors. Then we have 
\begin{eqnarray*}
 F(\gamma_1)-F(\gamma_2) & = & -\frac{1}{2i\pi}\oint_{\mathcal{C}} (H_{\textrm{eff}}(\gamma_2)-z)^{-1}(H_{\textrm{eff}}(\gamma_1)-H_{\textrm{eff}}(\gamma_2))(H_{\textrm{eff}}(\gamma_1)-z)^{-1} dz \\ & = & 
 -\frac{1}{2i\pi}\oint_{\mathcal{C}} (H_{\textrm{eff}}(\gamma_2)-z)^{-1}(A_{\textrm{eff}}(\gamma_1)-A_{\textrm{eff}}(\gamma_2)) (H_{\textrm{eff}}(\gamma_1)-z)^{-1} dz. 
\end{eqnarray*} 
The map $F$ does not depend on the choice of the surrounding loop provided it encloses the appropriate part of the spectrum. Expending it continuously to infinity, we can replace it in this formula by $\mathcal{C}=\mu'+i\mathbb{R}$ and write $z=\mu'+is$. We estimate  $A_{\textrm{eff}}(\gamma_1)-A_{\textrm{eff}}(\gamma_2)=A_{\textrm{eff}}(\gamma_1-\gamma_2)$ with $ 2 ||W||_{\ell^1}||\gamma_1-\gamma_2||$ and $(H_{\textrm{eff}}(\gamma_2)-\mu'+is)^{-1} \leq \big((G'/2-2||W||_{\ell^1})^2+s^2 \big)^{-1/2}$. Therefore
\begin{equation}
 ||F(\gamma_1)-F(\gamma_2)|| \leq \frac{||W||_{\ell^1}||\gamma_1-\gamma_2||}{\pi} \int_{\mathbb{R}} \big((G'/2-2||W||_{\ell^1})^2+s^2\big)^{-1}ds ,
\end{equation} so that
\begin{equation}\label{PointFixe}
 ||F(\gamma_1)-F(\gamma_2)|| \leq \frac{||W||_{\ell^1}||\gamma_1-\gamma_2||}{(G'/2-2||W||_{\ell^1})}.
\end{equation}
Now, if $||W||_{L^1}$ is smaller than $G'/6$, then $F$ is contracting, so it has a unique fixed point.
\end{proof}

This concludes the first section, we have shown that in the presence of a gap, \eqref{Esyst} has always a unique solution.

\section{Local influence: proof of Theorem \ref{TheoLocal}}

\label{SLocal}

 The aim of this section is to show that under hypothesis (A2) and (A3) the random potential in a domain $\Lambda_2$ will only have a very small influence on $A_{\textrm{eff}}$ in $\Lambda_1$ if $\Lambda_2$ is far enough from $\Lambda_1$.
This implies a weak form of independence between the submatrices $(H_{min})_{|\Lambda_1}$ and $(H_{min})_{|\Lambda_2}$ which is necessary for the multiscale analysis. The key tool is a Combes-Thomas estimate.

\subsection{Combes-Thomas estimate}

Because we want to use it for more general operators than just the Laplacian, we have written again the details of the proof.

\begin{Def}[Exponential off-diagonal decay operator]
We will say that an operator $K$ on $L^2(\Lambda)$ has exponential off-diagonal decay if there exist a rate $M>0$, and a constant $C$ so that $(\delta_x, K \delta_y)\leq C \exp(-M|x-y|)$ for all $x,y \in \Lambda$. 
\end{Def} 

In our case, we note that $-\Delta+V+A_{\textrm{eff}}$ is an exponential decay operator if $W(x-y)$ decays exponentially.

\begin{Lem}[Combes-Thomas estimate]
Let $K$ be an exponential off-diagonal decay operator and $\Sigma$ be its spectrum. Let $\lambda \in \mathbb{C}$ so that $d(\lambda,\Sigma) > 0$. Then there exist $\nu>0$ and $C>0$ such that
\begin{equation}\label{CTomas}
(\delta_x,(K-\lambda)^{-1}\delta_y) \leq C e^{-\nu |x-y|}.
\end{equation}
\end{Lem}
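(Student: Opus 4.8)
The plan is to run the classical Combes--Thomas conjugation argument, written for a general operator with exponential off-diagonal decay rather than just for the Laplacian. Fix the two sites $x,y\in\Lambda$ once and for all, and for a parameter $R>0$ introduce the bounded, $1$-Lipschitz weight $\phi_R(z)=\min(|z-y|,R)$ on $\Lambda$, which satisfies $|\phi_R(z)-\phi_R(z')|\le|z-z'|$ for all $z,z'\in\Lambda$. Since $\phi_R$ is bounded, the multiplication operator $e^{\eta\phi_R}$ is bounded with bounded inverse for every $\eta\ge0$, so the conjugated operator $K_\eta:=e^{\eta\phi_R}Ke^{-\eta\phi_R}$ is well defined and bounded on $\ell^2(\Lambda)$.

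The first step is to bound $\|K_\eta-K\|$ for small $\eta$, uniformly in $R$. Its matrix elements are $(\delta_z,(K_\eta-K)\delta_{z'})=\bigl(e^{\eta(\phi_R(z)-\phi_R(z'))}-1\bigr)(\delta_z,K\delta_{z'})$, so combining $|\phi_R(z)-\phi_R(z')|\le|z-z'|$, the hypothesis $|(\delta_z,K\delta_{z'})|\le Ce^{-M|z-z'|}$, and the elementary inequality $|e^t-1|\le|t|e^{|t|}$, we get $|(\delta_z,(K_\eta-K)\delta_{z'})|\le C\eta\,|z-z'|\,e^{-(M-\eta)|z-z'|}$. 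For $\eta\le M/2$ the right-hand side is summable in $z$ uniformly in $z'$ (and conversely), so the Schur test yields $\|K_\eta-K\|\le C_1\eta$ for a constant $C_1=C_1(C,M,d)$ that depends neither on $R$ nor on $\Lambda$.

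Next, set $\kappa:=\|(K-\lambda)^{-1}\|<\infty$, which is finite because $d(\lambda,\Sigma)>0$ (and equals $d(\lambda,\Sigma)^{-1}$ when $K$ is self-adjoint, as it is in the application to $-\Delta+V+A_{\textrm{eff}}$), and fix $\eta:=\min\{M/2,\ (2C_1\kappa)^{-1}\}>0$. Writing $K_\eta-\lambda=(K-\lambda)\bigl(I+(K-\lambda)^{-1}(K_\eta-K)\bigr)$ and noting $\|(K-\lambda)^{-1}(K_\eta-K)\|\le\kappa C_1\eta\le\tfrac12<1$, the Neumann series shows that $K_\eta-\lambda$ is invertible with $\|(K_\eta-\lambda)^{-1}\|\le2\kappa$, again uniformly in $R$. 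Finally, using $(K_\eta-\lambda)^{-1}=e^{\eta\phi_R}(K-\lambda)^{-1}e^{-\eta\phi_R}$ and choosing $R=|x-y|$, so that $\phi_R(y)=0$ and $\phi_R(x)=|x-y|$, we obtain
\[(\delta_x,(K-\lambda)^{-1}\delta_y)=e^{\eta(\phi_R(y)-\phi_R(x))}(\delta_x,(K_\eta-\lambda)^{-1}\delta_y)=e^{-\eta|x-y|}(\delta_x,(K_\eta-\lambda)^{-1}\delta_y),\]
whence $|(\delta_x,(K-\lambda)^{-1}\delta_y)|\le 2\kappa\,e^{-\eta|x-y|}$, which is the claim with $\nu=\eta$ and $C=2\kappa$.

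The only genuine subtlety — and the step to carry out with care — is the truncation of the weight: one cannot conjugate directly by $e^{\eta|z-y|}$ since that operator is unbounded, so one works with the bounded truncations $\phi_R$, checks that the norm estimate $\|K_\eta-K\|\le C_1\eta$ holds uniformly in $R$, and only at the very end selects $R=|x-y|$ to read off the matrix element between the two fixed sites. Everything else is routine: the Schur test, the Neumann series, and the bookkeeping of constants, which are manifestly independent of the domain $\Lambda$ and degrade in the expected way as $d(\lambda,\Sigma)\to0$.
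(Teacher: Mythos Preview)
Your proof is correct and follows the same Combes--Thomas conjugation argument as the paper: conjugate $K$ by an exponential weight, bound the perturbation $K_\eta-K$ via a Schur-type estimate using the off-diagonal decay, and invert by Neumann series (the paper phrases this last step as a spectral perturbation). The only notable difference is that you truncate the weight to $\phi_R(z)=\min(|z-y|,R)$ so that the conjugating multiplication operator is bounded, whereas the paper works directly with the unbounded weight $f(z)=e^{-\nu|x-z|}$ and glosses over this issue; your version is the more careful of the two.
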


\begin{proof} Let $f(z) := e^{-\nu |x-z|}$ we have  

\begin{eqnarray}
(\delta_x,(K-\lambda)^{-1}\delta_y)  & = & \big(\delta_x,f(z)^{-1}f(z)(K-\lambda)^{-1}f(z)^{-1}f(z)\delta_y \big) \nonumber \\
 & = & f(x)^{-1}f(y)\big(\delta_x,f(z)(K-\lambda)^{-1}f(z)^{-1}\delta_y\big) \nonumber \\
 & = & e^{-\nu |x-y|}\big(\delta_x,(f(z)K f(z)^{-1}-\lambda)^{-1}\delta_y\big). \label{EineCT}
\end{eqnarray} 
Then we have
\begin{equation}
\big((f(z)Kf(z)^{-1}-K)u\big)(n)= \sum_{m\in \Lambda} (e^{{-\nu(|x-n|-|x-m|)}}-1)K(m,n)u(m),
\end{equation}
so
\begin{equation*}
||(f(z)Kf(z)^{-1}-K)u|| \leq C \sum_{m\in\Lambda} (e^{{\nu(|n-m|)}}-1)e^{-M(n-m)}|u(m)| < \infty ,
\end{equation*}
and therefore
\begin{equation*}
||(f(z)Kf(z)^{-1}-H)|| \leq C ||(e^{{\nu |x|}}-1)e^{-M|x|}||_{\ell^1}.
\end{equation*}
Because of the dominated convergence theorem, this converges to $0$ with $\nu$ going to $0$. So there exist $\nu>0$ so that $||(f(z)Kf(z)^{-1}-H)||<d(\lambda, \Sigma)$ and 
\begin{eqnarray*}
d\big(\lambda, \Sigma(f(z)Kf(z)^{-1})\big) \geq d(\lambda, \Sigma)-||f(z)K f(z)^{-1}-H|| > 0.
\end{eqnarray*}
So there exists $C'$ such that
 \begin{equation*}
||(f(z)Kf(z)^{-1}-\lambda)^{-1}|| \leq \frac{1}{d\big(\lambda, \Sigma(f(z)Kf(z)^{-1})\big)} = C'
\end{equation*}
and using \eqref{EineCT} we find
\begin{equation*}
(\delta_x,(K-\lambda)^{-1}\delta_y)\leq C'e^{-\nu|x-y|}.
\end{equation*}

\end{proof}
 
 \subsection{Local influence}
We prove here Theorem \ref{TheoLocal}. We use again the map defined in Section \ref{HamiltonianConstruction}.
\begin{equation}
F_V(\gamma) = \frac{1}{2i\pi}\oint_{\mathcal{C}} (-\Delta+V+A_{\textrm{eff}}(\gamma)-z)^{-1} dz, 
\end{equation}
where $\mathcal{C}$ is the loop enclosing the whole part of the spectrum below the middle of the gap $\mu$. We will denote by $\gamma_{min}(V)$ the solution of the system given by Theorem \ref{TheoExistence} and recall that $F_V(\gamma_{min}(V))=\gamma_{min}(V)$. 

\begin{proof}[Proof of Theorem \ref{TheoLocal}] Because we can write $\delta V = \sum_{k=1}^K \delta V /K$ and apply our theorem $K$ times we can suppose $\delta V$ arbitrary small. We are looking for the new fixed point for $F_{V+\delta V}$ which is the limit of $(F_{V+\delta V})^n(\gamma)$. We start from $\gamma=\gamma_{min}(V)$ and remark that
\begin{equation}
(\gamma_{min}(V+\delta V)-\gamma)=\lim_{n\rightarrow \infty} \Big((F_{V+\delta V})^n(\gamma)-\gamma \Big) = \sum_{n=0}^\infty \Big((F_{V+\delta V})^{n+1}(\gamma)-(F_{V+\delta V})^n(\gamma)\Big).
\end{equation}

\paragraph{Step 1.} We evaluate the first term of the sum $F_{V+\delta V}(\gamma)-\gamma = F_{V+\delta V}(\gamma)-F_{V}(\gamma)$, as follows
\begin{eqnarray*}
F_{V+\delta V}(\gamma)-F_{V}(\gamma) & = & \frac{1}{2i \pi}\oint_{\mathcal{C}} \frac{1}{H_{\textrm{eff}}(\gamma)+\delta V-z}-\frac{1}{H_{\textrm{eff}}(\gamma)-z} dz\\
 & = & \frac{1}{2i \pi} \oint_{\mathcal{C}} \frac{1}{H_{\textrm{eff}}(\gamma)-z} \Big(\delta V \sum_{k\leq 0}\big((H_{\textrm{eff}}(\gamma)-z)^{-1}\delta V\big)^k \Big)\frac{1}{H_{\textrm{eff}}(\gamma)-z} dz \\
 & = & \frac{1}{2i \pi} \oint_{\mathcal{C}} \frac{1}{H_{\textrm{eff}}(\gamma)-z} B(\delta V) \frac{1}{H_{\textrm{eff}}(\gamma)-z} dz , 
\end{eqnarray*}
where 
\begin{equation}
B(\delta V)=\delta V\sum_{k\leq 0}\big((H_{\textrm{eff}}(\gamma)-z)^{-1}\delta V\big)^k.
\end{equation}
This sum converges as soon as $\delta V < \frac{G}{6}$. Note that if the support of $\delta V$ has a bounded support $\Omega$ so has $B(\delta V)$. We will use the Combes-Thomas estimate (\ref{CTomas}) for $y_1,y_2$ outside $\Omega$. Remark that if we choose the loop $\mathcal{C}$ correctly, $\nu$ does not depend on $z$ but only on the size on the gap. We find
\begin{eqnarray*}
|(F_{V+\delta V}(\gamma))(y_1,y_2)- (F_{V}(\gamma))(y_1,y_2)| & = & \oint_{\mathcal{C}} (y_1,\frac{1}{H_{\textrm{eff}}(\gamma)-z} B(\delta V) \frac{1}{H_{\textrm{eff}}(\gamma)-z}y_2) dz \\
 & \leq & \frac{1}{2i \pi}\oint_{\mathcal{C}} C^2 \sum_{x,x'\in \Omega} e^{-\nu |y_1-x|}(x,B(\delta V)x')e^{-\nu |x'-y_2|} dz \\
 & \leq & C^2 |\mathcal{C}| e^{-\nu d(y_1,\Omega)} \sum_{x,x'\in \Omega} |(x,B(\delta V)x')e^{-\nu |x'-y_2|}| \\
 & \leq & C'||\delta V|| e^{-\nu d(y_1,\Omega)}
\end{eqnarray*}
where $C'$ is just a constant. With $W$ small enough, there exists $\tau < 1$ so that 
\begin{equation}
[A_{\textrm{eff}}(F_{V+\delta V}(\gamma))-A_{\textrm{eff}}(\gamma))](y_1,y_2) \leq \tau ||\delta V|| e^{-\nu d(y_1,\Omega)}.
\end{equation}

\paragraph{Step 2.} We evaluate the remainder of the sum. We repeat the previous argument with  
\begin{equation}
\delta A_{\textrm{eff}} = A_{\textrm{eff}}(F_{V+\delta V}^n(\gamma))-A_{\textrm{eff}}((F_{V+\delta V}^{n-1}(\gamma))
\end{equation} instead of $\delta V$. There is only one little difference: $B$ does not have a bounded support any more but still has an off-diagonal exponential decay (proved by iteration with constant $\nu>\nu'>0$). We just check this does not bring more difficulties: 
\begin{align*}
&|(F_{{V+\delta V}})^{n}(\gamma))(y_1,y_2)-(F_{V+\delta V})^{n-1}(\gamma)(y_1,y_2)|  \\
&  \qquad\leq  C^2 \sum_{x,x'\in \Omega} e^{-\nu |y_1-x|}(x,B(\delta A_{\textrm{eff}})x')e^{-\nu |x'-y_2|} dy \\
 & \qquad \leq  C^2 ||\delta A_{\textrm{eff}}|| \sum_{x,x'\in \mathbb{Z}^2} e^{-\nu|y_1-x|} e^{-\nu' d(x,\Omega)}e^{-\nu'  d(x',\Omega)}e^{-\nu |x'-y_2|} \\
&  \qquad\leq  C_2 ||\delta A_{\textrm{eff}}|| e^{-\nu' d(y_1,\Omega)}  \sum_{x,x'\in \mathbb{Z}^2} e^{\nu'd(y_1,\Omega)-\nu |y_1-x|-\nu'd(x,\Omega))}e^{-\nu  d(x',\Omega)}e^{-\nu |x'-y_2|} \\ 
 & \qquad \leq  C_2' ||(F_{{V+\delta V}}^{n-1}(\gamma))-F^{n-2}_{V+\delta V}(\gamma)|| e^{-\nu' d(y_1,\Omega)},
\end{align*} where $C_2'$ is just another constant. We can conclude by iteration that
\begin{equation}
|A_{\textrm{eff}}(F^n_{V+\delta V}(\gamma))(y_1,y_2)-A_{\textrm{eff}}(F^{n-1}_{V}(\gamma))(y_1,y_2)| \leq ||\delta V||\tau^n e^{-\nu' d(y_1,\Omega)},
\end{equation}
and so
\begin{align*}
& |\big( A_{\textrm{eff}}(\gamma_{min}(V+\delta V))-A_{\textrm{eff}}(\gamma_{min}(V))
\big)(y_1,y_2)| \\ & \qquad = \lim_{n\rightarrow\infty} | A_{\textrm{eff}}(F_{V+\delta V}^n(\gamma)) (y_1,y_2)-A_{\textrm{eff}}(\gamma)(y_1,y_2)| \\ & \qquad \leq   ||\delta V|| \frac{1}{1-\tau} e^{-\nu' d(y_1,\textrm{supp}(\delta V))},
\end{align*}
as we wanted.
\end{proof}

\section{Wegner estimate: proof of Theorem \ref{TheoWegner}}

\label{SWegner}

In this section, we prove the Wegner-type estimate in Theorem \ref{TheoWegner}.

The idea of the proof is the following. At first sight we do not have any idea of how $A_{\textrm{eff}}$ looks like. But because of the gap, if
\begin{equation}
\gamma=\mathds{1}_{\leq \mu}\big(-\Delta+V+A_{\textrm{eff}}(\gamma)\big)
\end{equation}
then
\begin{equation}\label{Eoffset}
\gamma=\mathds{1}_{\leq (\mu+\alpha) }\big(-\Delta+V+A_{\textrm{eff}}(\gamma)\big)=\mathds{1}_{\leq (\mu) }\big(-\Delta+V-\alpha \mathds{1}_{\mathbb{Z}^d}+A_{\textrm{eff}}(\gamma)\big) 
\end{equation}
for any $\alpha\in \mathbb{R}$ with $2 |\alpha|$ smaller than the gap. So if we could add $\alpha$ to the random potential with $\alpha$ a smooth random variable, in this case every eigenvalue of $H_{min}$ would just be offset by $\alpha$ no matter what the non linear part is and we are done. In our case, we will make the change of variable $\big(V_\omega(x)\big)\rightarrow (\alpha=\frac{1}{|\Lambda|}\sum_{x\in\Lambda} V_\omega(x), V_\omega(x)-\alpha)$ for $x\in \Lambda$ and we expect that the conditional density of $\alpha$ is smooth enough and that the change induced to $\gamma$ is small.

\begin{proof}[Proof of Theorem \ref{TheoWegner}] Let $\Lambda=\Lambda_L(n)$ be the cube in $\mathbb{Z}^d$ of size $L$ with its center in $n$ and $\Lambda_{2L}(n)$ the cube twice bigger. Because of \eqref{Eoffset} and \eqref{ELocal} 
\begin{eqnarray*}
||\frac{d}{d\alpha}\big((\gamma_{min})(V+\alpha \mathds{1}_{\Lambda_{2L}(n)}\big)_{|\Lambda_L(n)}||& = & || \frac{d}{d\alpha}\big((\gamma_{min})_{|\Lambda_L(n)}(V+\alpha \mathds{1}_{\mathbb{Z}^d} -\alpha \mathds{1}_{\Lambda_{2L}(n)^c}\big) || \\
 & = & || \frac{d}{d\alpha}\big((\gamma_{min})_{|\Lambda_L(n)}(V-\alpha \mathds{1}_{\Lambda_{2L}(n)^c}\big) || \\
 & \leq & C e^{-\nu d\big(\Lambda_L(n),\Lambda_{2L}(n)^c\big)}\\
 & \leq & C e^{-\nu L}.
\end{eqnarray*}
We suppose that $L$ is large enough so that $2||W||_{\ell^1}Ce^{-\nu L} \leq 1/2$ and  we obtain that
\begin{equation}\alpha\rightarrow ||\mathds{1}_{\Lambda_L(n)}A_{\textrm{eff}}(V+\alpha \mathds{1}_{\Lambda_{2L}(n)})\mathds{1}_{\Lambda_L(n)}|| 
\end{equation} is $1/2$ Lipschitz. Under this hypothesis, for any $\lambda_i(\alpha)$ eigenvalue of 
\begin{equation*}
\Big(-\Delta +V+\alpha \mathds{1}_{\Lambda_{2L}(n)}+A_{\textrm{eff}}\big(\gamma_{min}\big(V+\alpha \mathds{1}_{\Lambda_{2L}(n)})\big)\Big)_{|\Lambda_L(n)},
\end{equation*}
 we have
\begin{equation} \label{Eeigcroissant}
\frac{d}{d\alpha} \lambda_i(\alpha) \geq 1-||\frac{d}{d\alpha}(\mathds{1}_{\Lambda_L(n)}A_{\textrm{eff}}(V+\alpha \mathds{1}_{\Lambda_{2L}(n)})\mathds{1}_{\Lambda_L(n)})|| \geq \frac{1}{2}. 
\end{equation}
 
Let $\lambda\in \mathbb{R}$ and $\epsilon > 0$. Let $D_0=\{d_1<d_2<...<d_k\}$ be so that $\rho(s)$ is Lipschitz on $]d_n,d_{n+1}[$. Let $f$ and $\delta$ be two positive functions that will be chosen later. We define the following events
\begin{equation}
O_x := \{\omega :  \forall y \text{ such that } |y-V_\omega(x)|<\delta(\epsilon) : \rho(y)>f(\epsilon)\text{ }  \mbox{, and } d(V_\omega(x),D_0)>\delta(\epsilon)\}
\end{equation}
for any $x\in \Lambda_{2L}(n)$. We now estimate
\begin{align}
& \mathbb{P}\Big(d\big(\sigma[(H_{min})_{|\Lambda_L(n)}], \lambda\big)<\epsilon \Big) \nonumber \\
& \qquad \leq \mathbb{P}\Big(\cup_{ x\in \Lambda_{2L}(n)} O_x^c \Big)+\mathbb{P}\Big(\cap_{x\in \Lambda_{2L}(n)} O_x \cap d\big(\sigma[(H_{min})_{|\Lambda_L(n)}], \lambda\big)<\epsilon \Big). \label{Eine1}
\end{align}
We will deal with each term separately. Starting with the left term, we erase the indices because the probability does not depend of the position and argue as follows:
\begin{align}
\mathbb{P}\Big(\cup_{ x\in \Lambda_{2L}(n)} O_x^c \Big) & \leq |\Lambda_{2L}(n)|\mathbb{P}(O_x^c) \nonumber
\\ & =2^d|\Lambda|\mathbb{P}(O^c) \nonumber
\\ & \leq 2^d|\Lambda|\big(\mathbb{P}(d(V_\omega,D_0)<\delta(\epsilon))+\mathbb{P}(V_\omega : \exists y : |y-V_\omega|<\delta(\epsilon), \rho(y)<f(\epsilon ))\big) \nonumber
\\ & \leq 2^d|\Lambda|\big(\mathbb{P}(d(V_\omega,D_0)<\delta(\epsilon))+\mathbb{P}(V_\omega : \rho(V_\omega)<f(\epsilon)+\delta(\epsilon) ||\rho'||_\infty) \big) \nonumber
\\ & \leq 2^d|\Lambda|  \Big(||\rho||_\infty 2 \delta(\epsilon) \#|D_0|+(\delta(\epsilon) ||\rho'||_\infty + f(\epsilon) )|\text{supp}(\phi)|\Big). \label{Eine2}
\end{align}
The right term in \eqref{Eine1} can be estimated by introducing the mean and the resolvent, using \begin{equation}
\mathds{1}_{[\lambda-\epsilon, \lambda+\epsilon]}(\alpha)\leq \frac{2 \epsilon^2 }{((\lambda-\alpha)^2+\epsilon^2)}=2 \epsilon \Im (\frac{1}{\lambda-\alpha+i\epsilon}).
\end{equation}
We simplify a bit the notation using $\cap_x O_x$ instead of $\cap_{x\in \Lambda_{2L}(n)} O_x$. We get
\begin{align*}
\mathbb{P}\Big(\cap_x O_x \cap d\big(\sigma[(H_{min})_{|\Lambda_L(n)}], \lambda\big)<\epsilon \Big) & = \mathbb{E}\Big[\mathds{1}_{d(\sigma(H^\Lambda_{min}),\lambda)<\epsilon}\mathds{1}_{\cap_x O_x}\Big] 
\\ & \leq 2\epsilon \mathbb{E}\Big[\Im  \Big( \Tr[\big( (H_{min})_{|\Lambda_L(n)}-\lambda+i\epsilon\big)^{-1}]\Big)\mathds{1}_{\cap_x O_x}\Big].
\end{align*}
We now make a change of variable for $V_\omega(x) \in \Lambda_{2L}(n)$ : 
\begin{equation}
\big(V_\omega (x)\big)\rightarrow \Big(\alpha=\frac{1}{2^d|\Lambda|}\sum_{x\in \Lambda_{2L}(n)} V_\omega(x), V_\omega(x)-\alpha\Big).
\end{equation}
We write $\tilde{V}_\omega(x)=V_\omega(x)-\alpha$ and $\xi_{\tilde{V}} (\alpha)$ for the conditional random density of the mean knowing $\tilde{V}$. 
We first integrate over $\alpha$, then over $\tilde{V}$ (we denote the expectation by $\mathbb{E}_{\tilde{V}}$): 
\begin{align*}
& \mathbb{P}\Big(\cap_x O_x\cap d\big(\sigma[(H_{min})_{|\Lambda_L(n)}], \lambda\big)<\epsilon \Big) \\
 & \qquad \leq \mathbb{E}_{\tilde{V}}\Big[2\epsilon \int \Im \Big( Tr[\big((H_{min})_{| \Lambda_L}-\lambda+i\epsilon\big)^{-1}]\Big) \mathds{1}_{\cap_x O_x}\xi_{\tilde{V}_i}(\alpha)  d\alpha \Big] 
\\ & \qquad \leq  2|\epsilon||\mathbb{E}_{\tilde{V}}\Big[\sum_{\lambda_i\in \sigma\big((H_{min})_{| \Lambda_L}\big)} \int \Im (\frac{1}{\big(\lambda_i(\alpha)-\lambda+i\epsilon\big)})\mathds{1}_{\cap_x O_x}\xi_{\tilde{V}_i}(\alpha)  d\alpha\Big].
\end{align*}
We estimate the integral with a change of variable $\alpha'=(\lambda_i(\alpha)-\lambda)$, $d\alpha'=(\frac{d}{d\alpha}\lambda_i)d\alpha$. Recall that 
\begin{equation}
\int\frac{\epsilon}{(\alpha')^2+\epsilon^2} \frac{\xi_{\tilde{V}}(\alpha)\mathds{1}_{\cap_x O_x}}{\frac{d}{d\alpha}\lambda_i} d\alpha'  \leq \pi \sup\big[\frac{|\xi_{\tilde{V}}(\alpha)\mathds{1}_{\cap_x O_x}}{\frac{d}{d\alpha}\lambda_i} \big].
\end{equation} So we have
\begin{align*}
 \mathbb{P}\Big(\cap_x O_x \cap d\big(\sigma[(H_{min})_{|\Lambda_L(n)}], \lambda\big)<\epsilon \Big) & \leq 2 \pi|\epsilon||\Lambda_L|\mathbb{E}_{\tilde{V}}\Big[ \sup\big[\frac{\xi_{\tilde{V}}(\alpha)\mathds{1}_{\cap_x O_x}}{\frac{d}{d\alpha}\lambda_i}\big] \Big]. 
\end{align*}
Finally, because of \eqref{Eeigcroissant}, $\frac{d}{d\alpha}\lambda_i>1/2$ and we get
\begin{align} \label{Eine4}
\mathbb{P}\Big(\cap_x O_x \cap d\big(\sigma[(H_{min})_{|\Lambda_L(n)}], \lambda\big)<\epsilon \Big) &\leq 4 \pi |\epsilon| |\Lambda_L|\sup[\xi_{\tilde{V}_i}\mathds{1}_{\cap_x O_x}].
\end{align}
From now on, it is enough to have an estimate on $\xi_{\tilde{V}}\mathds{1}_{\cap_x O_x}$. A computation gives
\begin{equation}
\xi_{\tilde{V}_i} (\alpha) d\alpha=\mathbb{P}\big(\sum_x V_\omega(x) \in [\alpha, \alpha+ d \alpha]|\tilde{V}\big) =\frac{\prod_{i\in\Lambda}\rho(\tilde{V}_i+\alpha)}{\int\prod_{i\in\Lambda}\rho(\tilde{V}_i+\alpha') d\alpha'} d\alpha .
\end{equation}
Let $\alpha_0 \in \mathbb{R}$. If we do not have $\tilde{V}_\omega(x)+\alpha_0 \in O_x$ for all $x$, then $\xi_{\tilde{V}}(\alpha_0)\mathds{1}_{\cap_x O_x}=0$ and we have nothing else to do. So we can assume that  
$\forall \alpha, |\alpha-\alpha_0| < \delta(\epsilon) \Rightarrow \rho(\tilde{V}(x)+\alpha)>f(\epsilon)$ for all $x\in \Lambda_{2L}(n)$ and all $|\alpha-\alpha_0|<\delta(\epsilon)$. So we have
\begin{equation}
\frac{\frac{d}{d\alpha}\prod_{x \Lambda_{2L}} \rho(\tilde{V}+\alpha)}{\prod_{x\in \Lambda_{2L}} \rho({V}+\alpha)} = \sum_{x\in \Lambda_{2L}}\frac{\rho'(\tilde{V}_\omega(x)+\alpha)}{\rho(\tilde{V}_\omega(x)+\alpha)} \leq 2^d|\Lambda_L| \frac{||\rho'||_{\infty}}{f(\epsilon)} .
\end{equation}
From this differential equation we get
\begin{align*}
& \prod_{x\in \Lambda_{2L}} \rho(\tilde{V}_\omega(x)+\alpha) \geq \exp\Big(-(|\alpha-\alpha_0|)2^d|\Lambda_L|\frac{||\rho'||_{\infty}}{f(\epsilon)}\Big)\prod_{x \in \Lambda_{2L}} \rho(\tilde{V}_\omega(x)+\alpha_0)
\end{align*}
and, after integrating, 
\begin{equation}
\int_{\alpha_0-\delta(\epsilon)}^{\alpha_0+\delta(\epsilon)} \prod\rho(\tilde{V}_\omega(x)+\alpha) d\alpha \geq  \big[1-\exp\big(-|\delta(\epsilon)|2^d |\Lambda_L|\frac{||\rho'||_{\infty}}{f(\epsilon)}\big)\big]\frac{f(\epsilon)}{2^d|\Lambda_L| ||\rho'||_{\infty}}\prod\rho(\tilde{V}_\omega(x)+\alpha_0). \nonumber
\end{equation} Therefore
\begin{equation}
 \frac{\prod\rho(\tilde{V}_\omega(x)+\alpha_0)}{\int\prod\rho(\tilde{V}_\omega(x)+\alpha) d\alpha}  \leq \frac{2^d|\Lambda_L| ||\rho'||_{\infty}}{f(\epsilon)(1-\exp(-(|\delta(\epsilon)|)2^d|\Lambda_L|\frac{||\rho'||_{\infty}}{f(\epsilon)})) }\nonumber
\end{equation} and hence we have
\begin{equation}
 \frac{\prod\rho(\tilde{V}_\omega(x)+\alpha_0)}{\int\prod\rho(\tilde{V}_\omega(x)+\alpha) d\alpha}  \leq 2 \max(\frac{2^d|\Lambda_L| ||\rho'||_{\infty}}{f(\epsilon)},\frac{1}{\delta(\epsilon)}).
\end{equation}
We finally obtain
\begin{equation}
\xi_{\tilde{V}_i}\mathds{1}_{\cap_x O_x} \leq 2 \max(\frac{2^d|\Lambda_L| ||\rho'||_{\infty}}{f(\epsilon)},\frac{1}{\delta(\epsilon)}), \label{Eine3} 
\end{equation} for all $\hat{V}$ and all $\alpha_0$.

To conclude, putting \eqref{Eine1}, \eqref{Eine2}, \eqref{Eine4} and \eqref{Eine3} together, we have
\begin{align*}
& \mathbb{P}\Big(d\big(\sigma[(H_{min})_{|\Lambda_L(n)}], \lambda\big)<\epsilon \Big)  \\
& \qquad \leq 2^{d}|\Lambda| \Big[||\rho||_\infty \delta(\epsilon) \#|D_0|+\big(\delta(\epsilon) ||\rho'||_\infty + f(\epsilon) \big)*|\text{supp}(\rho)| + 8\pi\epsilon. \max(\frac{|\Delta| ||\rho'||_\infty}{f(\epsilon)}, \frac{1}{\delta(\epsilon)}\Big],
\end{align*}
from which we can conclude \eqref{EWegner} choosing  $f(\epsilon) = \sqrt{\epsilon}/|\text{supp}(\rho)|^{-3/2}$ and $\delta(\epsilon)= \sqrt{\epsilon}*|\text{supp} (\rho)|^{1/2}$. Hence
\begin{align*}
& \mathbb{P}\Big(d\big(\sigma[(H_{min})_{|\Lambda_L(n)}], \lambda\big)<\epsilon \Big)  \\
& \qquad \leq C |\Lambda| \Big[|\text{supp}(\rho)|^{-1/2}+||\rho||_\infty |\text{supp}(\rho)|^{1/2}+ ||\rho'||_\infty |\text{supp}(\rho)|^{3/2}\Big],
\end{align*}
where the constant $C$ only depends on the cardinal of $D_0$.
\end{proof}

In particular, with the change $V\rightarrow \ell V $, we deduce that
\begin{equation} \mathbb{P}\Big(d\big(\sigma[(H_{min})_{|\Lambda_L(n)}], \lambda\big)<\epsilon \Big) \rightarrow 0 
\end{equation} 
when $\ell\rightarrow\infty$.

\section{Multiscale analysis}

\label{SMulti}
We will now start the proof of the multiscale analysis. There will be very little differences with the proof we can found in \cite[Part 10]{disertori2008random} and we will follow the method exposed there step by step. But because it is a more general case, we have written the proof again. 

\subsection{The setting}

For any operator $K$ with off-diagonal exponential decay and $\Lambda\subset \mathbb{Z}^d$, we define a border operator $\Gamma$ by
\begin{equation*}
\Gamma_{K,\Lambda}(x,y)=\begin{cases}
K(x,y) \quad \mbox{ if $(x \in \Lambda$ and $y \notin \Lambda)$ or ($y\in \Lambda$ and $x\notin \Lambda)$} \\
0 \quad \mbox{ otherwise.}
\end{cases}
\end{equation*}

The following proposition is a form of the Schur complement formula.
\begin{Prop}
Let $K$ with off-diagonal exponential decay, $\Lambda$ a box of size $L$, and $\lambda \in \mathbb{C}\setminus\mathbb{R}$. Then
\begin{equation}\label{hufre}
(K-\lambda)^{-1}(x,y)=-\sum_{u\in\Lambda,v\notin \Lambda} (K^{\Lambda}-\lambda)^{-1}(x,u)\Gamma_{K,\Lambda}(u,v)(K-\lambda)^{-1}(v,y)
\end{equation}
for any $x\in \Lambda$ and any $y \notin \Lambda$, where
\begin{equation*}
K^\Lambda(x,y)=\begin{cases}
K(x,y) \quad \mbox{ if $x \in \Lambda$ and $y \in \Lambda$} \\
0 \quad \mbox{ otherwise}
\end{cases}
\end{equation*}
is the restriction of $K$ to $\Lambda$.
\end{Prop}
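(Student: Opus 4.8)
The plan is to prove the resolvent identity \eqref{hufre} by the standard second resolvent trick, using the block decomposition of $\mathbb{Z}^d$ into $\Lambda$ and $\Lambda^c$. Write $K = K^\Lambda + K^{\Lambda^c} + \Gamma_{K,\Lambda}$, where $K^{\Lambda^c}$ is the restriction of $K$ to $\Lambda^c$ and $\Gamma_{K,\Lambda}$ collects exactly the matrix entries connecting $\Lambda$ to its complement. Set $K_0 := K^\Lambda + K^{\Lambda^c}$, the block-diagonal part, so that $K = K_0 + \Gamma_{K,\Lambda}$. Since $\lambda \in \mathbb{C}\setminus\mathbb{R}$, both $(K-\lambda)^{-1}$ and $(K_0-\lambda)^{-1}$ exist as bounded operators (self-adjointness of $K$, which is off-diagonal-decaying and hence bounded, gives $\sigma(K)\subset\mathbb{R}$; the same for $K_0$).

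Next I would apply the second resolvent identity
\begin{equation*}
(K-\lambda)^{-1} = (K_0-\lambda)^{-1} - (K_0-\lambda)^{-1}\,\Gamma_{K,\Lambda}\,(K-\lambda)^{-1}.
\end{equation*}
Now fix $x\in\Lambda$ and $y\notin\Lambda$ and take the $(x,y)$ matrix element. Because $K_0$ is block diagonal, $(K_0-\lambda)^{-1}$ is block diagonal as well: its $(x,\cdot)$ row, with $x\in\Lambda$, is supported on $\Lambda$ and coincides with $(K^\Lambda-\lambda)^{-1}(x,\cdot)$ (here $K^\Lambda-\lambda$ is understood as invertible on $\ell^2(\Lambda)$, equivalently $K^\Lambda$ is the operator that is $K$ on $\Lambda\times\Lambda$ and $0$ elsewhere restricted to its support). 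In particular the first term $(K_0-\lambda)^{-1}(x,y)$ vanishes since $x\in\Lambda$, $y\notin\Lambda$. For the second term, inserting $\delta_u$-resolutions of the identity, $\Gamma_{K,\Lambda}(u,v)$ is nonzero only when one of $u,v$ lies in $\Lambda$ and the other outside; pairing with the left factor $(K_0-\lambda)^{-1}(x,u)$, which forces $u\in\Lambda$ (as $x\in\Lambda$), we must have $v\notin\Lambda$. This yields exactly
\begin{equation*}
(K-\lambda)^{-1}(x,y) = -\sum_{u\in\Lambda,\,v\notin\Lambda} (K^\Lambda-\lambda)^{-1}(x,u)\,\Gamma_{K,\Lambda}(u,v)\,(K-\lambda)^{-1}(v,y),
\end{equation*}
which is \eqref{hufre}.

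The only genuinely delicate point is justifying the manipulations when $\Lambda^c$ (hence $K$, $K_0$, $\Gamma_{K,\Lambda}$) is infinite: one must check that $\Gamma_{K,\Lambda}$ is a bounded operator and that the series over $u,v$ converges absolutely. Both follow from the off-diagonal exponential decay hypothesis: $|\Gamma_{K,\Lambda}(u,v)|\le C e^{-M|u-v|}$ gives a Schur-test bound on $\|\Gamma_{K,\Lambda}\|$, and combining the exponential decay of $\Gamma_{K,\Lambda}$ with the Combes–Thomas bounds (the Lemma above) on $(K^\Lambda-\lambda)^{-1}$ and $(K-\lambda)^{-1}$ — valid since $d(\lambda,\sigma(K))\ge |\Im\lambda|>0$ and likewise for $K^\Lambda$ — makes the double sum absolutely summable, with geometric decay in $|x-y|$. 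With boundedness of all operators in hand, the second resolvent identity is a purely algebraic manipulation ($(A^{-1}-B^{-1}) = B^{-1}(B-A)A^{-1}$ with $A=K-\lambda$, $B=K_0-\lambda$), and taking matrix elements against $\delta_x,\delta_y$ is then immediate. So I expect the block-structure bookkeeping (who is in $\Lambda$, who is outside) to be the part most prone to sign or index errors, while the analytic input is entirely supplied by the already-proven Combes–Thomas estimate.
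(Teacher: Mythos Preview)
Your proof is correct and follows essentially the same route as the paper: decompose $K=K^\Lambda+K^{\Lambda^c}+\Gamma_{K,\Lambda}$, apply the second resolvent identity with the block-diagonal part $K_0=K^\Lambda+K^{\Lambda^c}$, and use that $(K_0-\lambda)^{-1}$ is block diagonal to kill the first term and reduce the sum to $u\in\Lambda,\ v\notin\Lambda$. The paper's argument is terser (it does not spell out the convergence and boundedness justifications you provide), but the underlying decomposition and identity are identical.
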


\begin{proof}
We can divide $K$ into the following three parts 
\begin{equation}
K=K^{\Lambda}+\Gamma_{K,\Lambda}+K^{\Lambda^c}
\end{equation}
where $\Lambda^c$ is the complement of $\Lambda$. We here use the resolvent formula
\begin{equation}
(K-\lambda)^{-1}=(K^\Lambda-\lambda)^{-1}+(K^{\Lambda^c}-\lambda)^{-1}-
\big((K^{\Lambda}-\lambda)^{-1}+(K^{\Lambda^c}-\lambda)^{-1}\big)\Gamma_{K,\Lambda}(K-\lambda)^{-1} .
\end{equation}
Just remark now that $(K^\Lambda-\lambda)^{-1}(x,y)=0$ and $(K^{\Lambda^c}-\lambda)^{-1}(x,y)=0$ if $x\in \Lambda$ and $y\notin \Lambda$.
\end{proof}

We now apply the multiscale method. Let $\Lambda_{L}(n)$ be the box of side length $2L+1$ centered at $n\in \mathbb{Z}^d$. We replace the random potential  $V_\omega$ by an arbitrary constant outside the box $\Lambda_{2L}(n)$ in order to make the mean field Hamiltonian inside $\Lambda_L(n)$ independent of what is happening outside $\Lambda_{2L}(n)$: 
\begin{eqnarray*}
\hat{V}_{\omega}^{\Lambda_L(n)}(x)=\begin{cases} V_{\omega}(x) \mbox{ if x} \in \Lambda_{2L}(n), \\
0 \mbox{ otherwise. }
\end{cases}
\end{eqnarray*}
Recall that $0\in \text{supp}(\mathbb{P}) $. From this potential we can obtain with Theorem \ref{TheoExistence} the minimiser $\gamma_{min}(\hat{V}_{\omega}^{\Lambda_L(n)})$ and the mean-field Hamiltonian $H_{min}(\hat{V}_{\omega}^{\Lambda_L(n)})$. We denote its restriction to $\Lambda_L(n)$ by
\begin{equation*}
\hat{H}(n,L):=\Big(H_{min}(\hat{V}_{\omega}^{\Lambda_L(n)})\Big)_{|\Lambda_L(n)} .
\end{equation*}

We introduce this Hamiltonian because of two properties. First it is independent of what is happening outside $\Lambda_{2L}(n)$. Second, it is a good approximation of $(H_{min})_{|\Lambda_L(n)}$. Indeed, from Theorem \ref{TheoLocal} we have
\begin{equation} \label{EPetiteDifference}
||(H_{min})_{|\Lambda_L(n)}-\hat{H}(n,L) || < D e^{-\nu L} 
\end{equation}
where $D$ does not depend on $n$ and $L$. 

\begin{Def}[L-resonance]
A number $\lambda \in \mathbb{R}$ is called L-resonant for the box $\Lambda_L(n)$ if there exists $A_c \mbox{ with } ||A_c||\leq 2 D \exp(-\nu L)$ and
\begin{equation}
 d\big(\lambda,\sigma[\hat{H}(n,L)+ A_c]\big) \leq \exp(-\sqrt{L}),
\end{equation}
 where $D$ is the constant defined in $\eqref{EPetiteDifference}$
\end{Def}
Remark that our definition of non-resonance is equivalent to
\begin{equation}
 d\big(\lambda,\sigma[\hat{H}(n,L)]\big) > \exp(-\sqrt{L})-2 D \exp(-\nu L).
\end{equation}
We have added the operator $A_c$ in the above definition to handle the difference between $(H_{min})_{|\Lambda_L(n)}$ and $\hat{H}(n,L)$. 
This corresponds to Definition 9.1 in \cite{disertori2008random}.

\begin{Def}[($L,\zeta$,$\lambda$)-good box]
The box $\Lambda_L(n)$ is called an ($L$,$\zeta$,$\lambda$)-good box if 
\begin{enumerate}
\item it is not L-resonant; 
\item for any $x \in \Lambda_{\sqrt{L}}(n)$, $y \notin \Lambda_{L}(n)$ and $A_c$ with $||A_c||\leq 2 D \exp(-\nu L)$, 
\begin{equation}
 \sum_v |\hat{H}(n,L)+A_c-\lambda)^{-1}(x,v)|\text{ }|\Gamma_{\hat{H}(n,L)+A_c,\Lambda_L(n)}(v,y)| \leq \exp(-\zeta |y-x|).
\end{equation}
\end{enumerate}
\end{Def}

\subsection{From a scale to another}

Let $L_0$ be not too small and set $L_k=L_0^{\alpha^k}$ with $1<\alpha<2$.
In this subsection, we prove the following proposition.
\begin{Prop} \label{Pscale}
If the following conditions are satisfied
\begin{enumerate}
\item for any 4 boxes of side length $L_k$ in $\Lambda_{L_{k+1}}(n)$, separated from each other by a distance of at least $2L_k$, there is at least one which is $(L_k,\zeta,\lambda)$-good with $\zeta >20/\sqrt{L_k}$,
\item no box in $\Lambda_{L_{k+1}}(n)$ of side length $4L_k$, $12L_k$ ,$20L_k$ is $L_k$-resonant;
\item the domain $\Lambda_{L_{k+1}}(n)$ is not $L_{k+1}$-resonant,
\end{enumerate}
then the cube $\Lambda_{L_{k+1}}(n)$ is $(L_{k+1},\zeta_{k+1},\lambda$)-good with a decay satisfying $\zeta_{k+1}>20/\sqrt{L}$. 
\end{Prop}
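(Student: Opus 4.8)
The plan is to follow the classical single-scale induction step of multiscale analysis (as in \cite[Part 10]{disertori2008random}), carefully tracking the extra $A_c$ perturbation that accounts for the mismatch between $\hat H(n,L)$ and $(H_{min})_{|\Lambda}$. The first step is to cover the big box $\Lambda_{L_{k+1}}(n)$ by overlapping small boxes $\Lambda_{L_k}(m)$ and invoke hypothesis (1): in any collection of four mutually $2L_k$-separated small boxes one is $(L_k,\zeta,\lambda)$-good. A combinatorial/geometric argument then shows that along any path from a point $x$ near the center of $\Lambda_{L_{k+1}}(n)$ to the boundary, one can find a chain of good small boxes, with only a controlled number of ``bad'' boxes that must be skipped. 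The non-resonance of the intermediate-size sub-boxes (hypothesis (2)) is what guarantees that the resolvent $(\hat H(n,L_{k+1})+A_c-\lambda)^{-1}$ is well-defined and bounded (by $\exp(\sqrt{L_k})$, say) on those sub-boxes, so that the few bad regions contribute only a subexponential factor.

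Next I would iterate the Schur complement / geometric resolvent identity from the Proposition above. Writing, for $x\in\Lambda_{\sqrt{L_{k+1}}}(n)$ and $y\notin\Lambda_{L_{k+1}}(n)$,
\begin{equation*}
(\hat H(n,L_{k+1})+A_c-\lambda)^{-1}(x,y)=-\sum_{u\in\Lambda_{L_k}(m),\, v\notin\Lambda_{L_k}(m)} (\cdots)(x,u)\,\Gamma(u,v)\,(\cdots)(v,y),
\end{equation*}
where $m$ is chosen so that $\Lambda_{L_k}(m)$ is a good box containing $x$, the decay of the first factor comes from property (2) of a good box, the off-diagonal decay of $\Gamma$ is built in, and one then repeats the expansion at $v$. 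Each step advances a distance of order $L_k$ toward $y$ and picks up a factor $e^{-\zeta|u-v|}$ with $\zeta>20/\sqrt{L_k}$; after roughly $|x-y|/L_k$ steps, multiplying by the number of geometric choices and the subexponential corrections from skipped bad boxes, one obtains a bound $e^{-\zeta_{k+1}|x-y|}$ with $\zeta_{k+1}$ only slightly smaller than $\zeta$, and the choice $L_k=L_0^{\alpha^k}$, $1<\alpha<2$, lets one check $\zeta_{k+1}>20/\sqrt{L_{k+1}}$ by comparing the loss at this scale against the growth of $L_{k+1}$. One must be slightly careful because the decay is only required on $\Lambda_{\sqrt{L_{k+1}}}(n)$, so the first Schur step needs the starting point $x$ to lie deep inside a good small box; this is arranged by the relation $\sqrt{L_{k+1}}\ll L_k$.

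The extra ingredient compared with the linear case is the uniformity over $A_c$ with $\|A_c\|\le 2D e^{-\nu L}$. The key point is that $\hat H(n,L_{k+1})$ is \emph{built from the minimiser $\gamma_{min}(\hat V_\omega^{\Lambda_{L_{k+1}}(n)})$}, which by Theorem \ref{TheoLocal} differs from $\gamma_{min}$ of the sub-box potentials by an exponentially small amount; and any $A_c$ smaller than $2De^{-\nu L_{k+1}}$ is negligible against the scale $e^{-\sqrt{L_k}}$ that controls resonances at the sub-scale, since $e^{-\nu L_{k+1}}\ll e^{-\sqrt{L_k}}$. Concretely, the good-box property at scale $L_k$ is \emph{already stated with a quantifier over all such $A_c$}, so when one restricts $\hat H(n,L_{k+1})+A_c$ to a sub-box $\Lambda_{L_k}(m)$ the resulting operator is of the form $\hat H(m,L_k)+A_c'$ with $\|A_c'\|\le \|A_c\|+\|(\hat H(n,L_{k+1}))_{|\Lambda_{L_k}(m)}-\hat H(m,L_k)\|\le 2De^{-\nu L_k}$ by \eqref{EPetiteDifference}, and the good-box hypotheses apply verbatim.

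The main obstacle is the bookkeeping in the combinatorial step: controlling the number and size of the regions of bad small boxes inside $\Lambda_{L_{k+1}}(n)$ using only the ``one good box out of four well-separated ones'' input, and showing that the non-resonance at scales $4L_k,12L_k,20L_k$ is exactly what is needed to bound the resolvent while crossing a bad region, so that each crossing costs at most $e^{\sqrt{L_k}}$ and there are $O(1)$ such crossings per scale. Everything else — the resolvent expansion, the accounting of geometric sums, and the final inequality $\zeta_{k+1}>20/\sqrt{L_{k+1}}$ — is the standard argument of \cite{disertori2008random}, reproduced here with the harmless modification of the $A_c$ term, and I would present it following that reference step by step.
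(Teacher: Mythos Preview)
Your proposal is correct and follows essentially the same approach as the paper: iterate the Schur/geometric resolvent identity starting from $x$, use good $L_k$-boxes for exponential gain and the non-resonant $4L_k$/$12L_k$/$20L_k$ boxes to cross the at-most-three bad clusters (this is exactly the geometric role of hypothesis~(2)), and handle the $A_c$ term via the observation that $(\hat H(n,L_{k+1})+A_c)_{|\Lambda_{L_k}(m)}=\hat H(m,L_k)+A_c'$ with $\|A_c'\|\le 2De^{-\nu L_k}$ by Theorem~\ref{TheoLocal}, so the sub-scale good-box and non-resonance hypotheses apply verbatim. The paper organises the iteration as a tree of chains $\mathcal{X}=(v_i,\Lambda_i)$ with an explicit two-case algorithm and a ``good path length'' $P_g$, but this is just a concrete implementation of the bookkeeping you describe.
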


This proposition corresponds to Theorem 10.20 in \cite{disertori2008random}. 
\begin{proof}[Proof of Proposition \ref{Pscale}]
Let $A_c$ be so that $||A_c||\leq 2D e^{-\nu L_{k+1}}$. Let $(\hat{H}(n,L_{k+1}))_{|\Lambda_{L_k}(m)}$ be the restriction to the box $\Lambda_{L_k}(m)$ of $\hat{H}(n,L_{k+1})$, with $\Lambda_k(m)\subset \Lambda_{k+1}(n)$. Because of Theorem \ref{TheoLocal}, we have 
\begin{equation} \nonumber
||(\hat{H}(n,L_{k+1}))|_{\Lambda_{L_k}(m)}-\hat{H}(m,L_k)|| \leq D e^{-\nu L_k} 
\end{equation}
and so 
\begin{equation}
||\hat{H}(n,L_{k+1})+A_c)_{|\Lambda_{L_k}(m)} -\hat{H}(m,L)|| \leq 2D e^{-\nu L_k}
\nonumber
\end{equation}
for $L_k$ big enough. 

Let  $K=\hat{H}(n,L_{k+1})+A_c$ and for simplicity we will just write $\Gamma_{\Lambda}$ instead of $\Gamma_{K,\Lambda}$.  Because of what we have just said, if $\Lambda_{L_k}(m)$ is $(L_k,\zeta,E)$ good then
\begin{equation}
 \sum_v (K^{\Lambda_{L_k}(m)}-\lambda)^{-1}(x,v)|\Gamma(v,y)| \leq \exp(-\zeta |y-x|)
\end{equation} and 
\begin{equation}
||(K^{\Lambda_{jL_k}(m)}-\lambda)^{-1}||\leq \exp(\sqrt{j L_k})
\end{equation}
if $\Lambda_{jL_k}(m)$ is not $jL_k-$resonant for $j=4,12,20$. 

The idea is to use equation (\ref{hufre}) as many times as we want. For any $v$ appearing in the equation \eqref{hufre}, we can define another box $\Lambda(v)$ with $y\notin \Lambda(v)$ and repeat the formula with $v$ instead of $x$. Proceeding this way again and again, we get after iteration   
\begin{align}\label{chain}
& (K-\lambda)^{-1}(x,y) \nonumber\\ &= \sum_{(u_i,v_i)_{i=1..n}} (K^{\Lambda_1}-\lambda)^{-1}(x,u_1)\Gamma_{\Lambda_1}(u_1,v_1)(K^{\Lambda_2}-\lambda)^{-1}(v_1,u_2)\Gamma_{\Lambda_2}(u_2,v_2)...(K-\lambda)^{-1}(v_n,y).
\end{align}
We will write the indices of the sum as a tree $\mathcal{T}$ of chains $\mathcal{X}=\big(u_i,v_i,\Lambda_i\big)_{i\leq n}$ with $v_{i}\in \Lambda_i$ $u_{i+1}\in \Lambda_i$, $v_{i+1}\notin \Lambda_i$ and $ y \notin \Lambda_i$.
We first sum over the $u_i$'s so as to reduce our chains to $\mathcal{X}=\big(v_i,\Lambda_i\big)_{i\leq n}$ and we introduce an upper bound $R_\mathcal{X}$ such that
\begin{equation}
R_\mathcal{X} \geq \sum_{(u_i)_{i=1..n}} |(K^{\Lambda_1}-\lambda)^{-1}(x,u_1)\Gamma_{\Lambda_1} (u_1,v_1)(K^{\Lambda_2}-\lambda)^{-1}(v_1,u_2)\Gamma_{\Lambda_2}(u_2,v_2)\cdots\Gamma(u_n,v_n)|.
\end{equation} Then, Equation \eqref{chain} gives  
\begin{equation}
|(K-\lambda)^{-1}(x,y)|\leq ||(K-\lambda)^{-1}||\sum_{\mathcal{X} \text{ leaves of } \mathcal{T}} R_\mathcal{X}.
\end{equation}
This formula is very general and is valid for any expansion. Different choices for the construction of the tree exit in the literature and we will follow that of \cite{disertori2008random}. The goal is to get at least one good box at each step. The choice of the $\Lambda_i$ and the construction of the tree $\mathcal{T}$ of $\mathcal{X}$ and $R_{\mathcal{X}}$ are made according to the following algorithm. 

We start from $x$ so we define $v_0=x$ and  $R=1$. The choice of $\Lambda_{i+1}$ will depend on $v_i$.
 \begin{itemize}
\item If we get close to the boundary or to $y$, \big($d(v_i, \partial \Lambda_{L_{k+1}}(n))<L_k$ or $d(v_i,y)<L_k$\big) then we stop. The construction of this chain is over and we carry on with the other branches of the tree. 
\item Otherwise
\begin{itemize}
\item if $\Lambda_{L_k}(v_i)$ is an $(L_k,\zeta,E)$-good box then 
\begin{align*}
& R_\mathcal{X}(K-\lambda)^{-1}(v_i,y)  \\  
& \qquad \leq R_\mathcal{X}\sum_{u_{i+1},v_{i+1}}(K^{\Lambda_i}-\lambda)^{-1}(v_i,u_{i+1})\Gamma_{\Lambda_i}(u_{i+1},v_{i+1}) (K-\lambda)^{-1}(v_{i+1},y)\\  
	& \qquad \leq \sum_{v_{i+1}\notin\Lambda_{L'}(v_i)} R_\mathcal{X} \exp(-\zeta |v_{i+1}-v_i|) (K-\lambda)^{-1}(v_{i+1},y)
\end{align*}
so for each $v_{i+1}$ outside $\Lambda_L(v_i)$, we set
\begin{equation} \label{EAlgo1}
R_{\mathcal{X}+v_{i+1}}=R_\mathcal{X} \exp(-\zeta|v_{i+1}-v_i|)
\end{equation}
and carry on the algorithm with the new chain $\mathcal{X}+v_{i+1}$;

\item else if $\Lambda_{L_k}(u_i)$ is not a good box, choose $j = 4$ or $12$ or $L_k$ such that for every $v$ in $\Lambda_{2jL_k}\setminus \Lambda_{jL_k}$, $\Lambda_{L_k}(v)$ is a good box and $\Lambda_{jL_k}$ is not resonant. It is always possible to do this because of the following remark: Either 3 boxes are far away from each other then there are 3 boxes $M_1,M_2,M_3$ of size $4L_{k}$ separated by at least $2L_k$ so that every cube $\Lambda_{L_k}(m)\subset\Lambda_{L_{k+1}}(m)$ whom center is not included in $\cup_{i=1,2,3} M_i$ are $(L_k,\zeta, \lambda)$-good. Or two of them are close and the other is far away then there are two boxes $M_1$ of size $12L_k$ and $M_2$ of size $4L_k$ separated by at least $2L_k$ so that every cube in $\Lambda_{L_k}(m)\subset\Lambda_{L_{k+1}}(m)$ whom center is not included in  $M_i$ are $(L_k,\zeta, \lambda)$-good. Or the three of them are together then there exist one box $M_1$ of size $20L_k$ so that every cube $\Lambda_{L_k}(m)\subset\Lambda_{L_{k+1}}(m))$ whom center is not included in $M_1$ are $(L_k,\zeta, \lambda)$-good. We can assume than the good box decay is smaller than the off diagonal decay parameter $\nu$, $\zeta < \nu$.
We then have

\begin{align*}
& R_\mathcal{X}(K-\lambda)^{-1}(v_i,y) \\
& \quad \leq R_\mathcal{X} \sum_{u_{i+1},v_{i+1}}(K^{\Lambda_{i+1}}-\lambda)^{-1}(v_i,u_{i+1})\Gamma_{\Lambda_{i+1}}(u_{i+1},v_{i+1}) (K-\lambda)^{-1}(v_{i+1},y) \\
& \quad \leq R_\mathcal{X} \sum_{u_{i+1}, v_{i+1} \in \Lambda_{2jL_k}(v_i)}(K^{\Lambda_{i+1}}-\lambda)^{-1}(v_i,u_{i+1})\Gamma_{\Lambda_{i+1}}(u_{i+1},v_{i+1}) (K-\lambda)^{-1}(v_{i+1},y) \\ & \qquad +\sum_{u_{i+1},v_{i+1}\notin \Lambda_{2jL_k}(u_i)}(K^{\Lambda_i}-\lambda)^{-1}(v_i,u_{i+1})\Gamma_{\Lambda_{i+1}}(u_{i+1},v_{i+1}) (K-\lambda)^{-1}(v_{i+1},y)\\
& \quad \leq R_\mathcal{X} \sum_{\substack{ u_{i+1}, v_{i+1} \in \Lambda_{2jL_k}(v_i)\\ u'_{i+1}, v'_{i+1} \notin \Lambda_{L_k}(v_{i+1})}}(K^{\Lambda_{i+1}}-\lambda)^{-1}(v_i,u_{i+1})\Gamma(u_{i+1},v_{i+1}) \times \\ 
& \qquad \qquad \times(K^{\Lambda_{L_k}(v_{i+1})}-\lambda)^{-1}(v_{i+1},u_{i+1}')\Gamma_{\Lambda_{L_k}(v_{i+1}}(u_{i+1}',v_{i+1}')(K-\lambda)^{-1}(v_{i+1}',y) \\ 
& \qquad +\sum_{v_{i+1}\notin \Lambda_{2jL_k}(u_i)} (jL_k)^d C e^{\sqrt{jL_k}} \exp\big(-\nu (|v_{i+1}-v_{i}|-jL_k)\big)(K-\lambda)^{-1}(v_{i+1},y) \\
& \quad \leq R_\mathcal{X} \sum_{v_{i+1}} (2jL_k)^d e^{\sqrt{jL_k}} \exp\big(-\zeta \max(|v_{i+1}-v_{i}|-2jL_k,L_k)\big) (K-\lambda)^{-1}(v_{i+1},y) \\ 
& \qquad +\sum_{v_{i+1}\notin \Lambda_{2jL_k}(u_i)} (jL_k)^d C e^{\sqrt{jL_k}} \exp\big(-\nu (|v_{i+1}-v_{i}|-jL_k)\big) (K-\lambda)^{-1}(v_{i+1},y).
 \end{align*}
 Therefore we can set
\begin{equation} \label{EAlgo2}
R_{\mathcal{X}+v_{i+1}}=R_\mathcal{X} 2 C (2jL_k)^d \exp(\sqrt{jL_k}) \exp\big(-\zeta\max(L_k,|v_{i+1}-v_i|-2jL_k)\big)
\end{equation}
and as previously, we carry on with the new chain $\mathcal{X}+v_{i+1}$.
\end{itemize}
\end{itemize}
\begin{figure}[h!]
\centering
\includegraphics[width = 10cm]{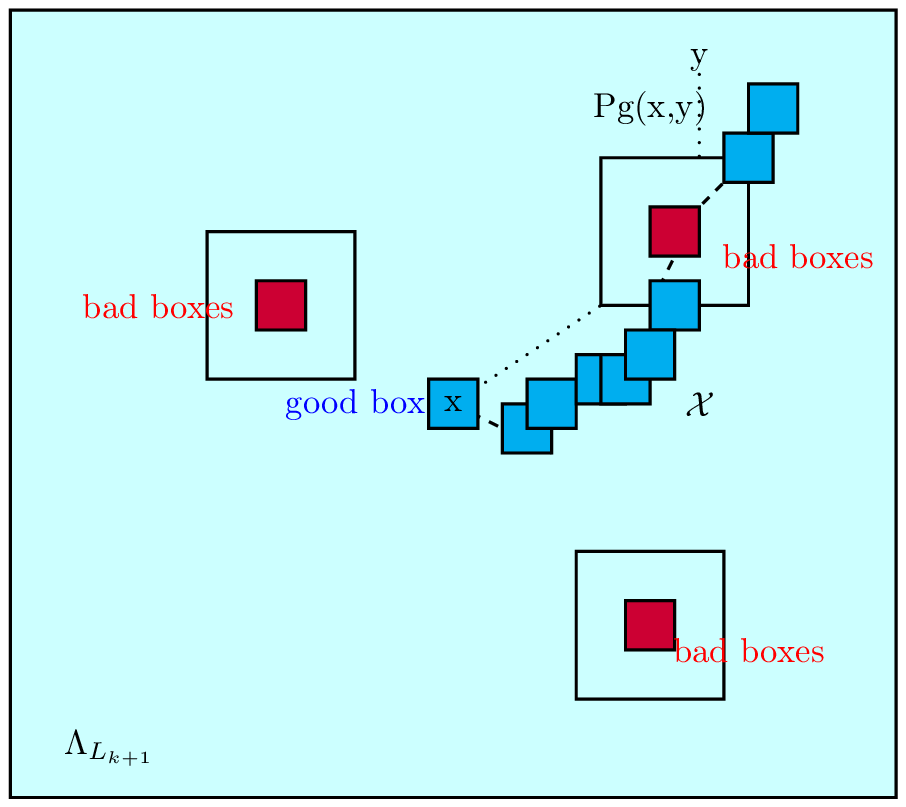}
\caption{Schematic representation a typical chain $\mathcal{X}$ used in the proof of Proposition \ref{Pscale}}
\end{figure}

We have finished the description of the algorithm. We define the "good path length" $P_g(x,v)$ as the minimum length between $x$ and $v$ when any cube $M=\Lambda_{2j L_k}(v)$ containing bad boxes defined in the procedure can be crossed for free. We easily check that \eqref{EAlgo1} and \eqref{EAlgo2} imply that
\begin{equation}
R_{\mathcal{X}} \leq \exp(-\zeta \max(\textrm{length}(\mathcal{X}) L_k, P_g(x,v_i)). 
\end{equation}
for every chain $\mathcal{X}$. From this, our algorithm gives us the following estimate:
\begin{Prop}
\begin{equation}
\sum_{\mathcal{X}} R_\mathcal{X} \leq \frac{1}{1-(L_{k+1})^{d} \exp(-\zeta L_k)} (L_{k+1})^{d \frac{L_{k+1}}{L_k}} \exp(-\zeta (P_g(x,y)).
\end{equation}
\end{Prop}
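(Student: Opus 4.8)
The plan is to bound the sum over the leaves of $\mathcal{T}$ by grouping the chains according to their length $\ell=\mathrm{length}(\mathcal{X})$, using two inputs from the algorithm. First, at each branching step a chain $\mathcal{X}$ ending at a point $v_i$ is replaced by children $\mathcal{X}+v_{i+1}$ indexed by a point $v_{i+1}$ which, as long as the chain has not reached the boundary, ranges essentially over $\Lambda_{L_{k+1}}(n)$; hence there are at most $(L_{k+1})^d$ children (up to a dimensional constant absorbed below), and the number of leaves of length $\ell$ is at most $(L_{k+1})^{d\ell}$. Second, the per-chain bound already recorded,
\begin{equation}
R_\mathcal{X}\le\exp\big(-\zeta\max(\ell L_k,\,P_g(x,v_i))\big),
\end{equation}
combined with the fact that at a leaf $v_i$ is within $L_k$ of $y$ (or of $\partial\Lambda_{L_{k+1}}(n)$, in which case the remaining factor of $(K-\lambda)^{-1}$ is simply absorbed into $\|(K-\lambda)^{-1}\|$ and the chain is even shorter), so that $P_g(x,v_i)\ge P_g(x,y)-L_k$, gives, after enlarging the constant,
\begin{equation}
\sum_{\mathcal{X}}R_\mathcal{X}\le\sum_{\ell\ge1}(L_{k+1})^{d\ell}\exp\big(-\zeta\max(\ell L_k,\,P_g(x,y))\big).
\end{equation}

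Next I would split this sum at $\ell^{*}:=P_g(x,y)/L_k$. Since the good path from $x$ stays inside $\Lambda_{L_{k+1}}(n)$ until it reaches the boundary, $P_g(x,y)\lesssim L_{k+1}$ and hence $\ell^{*}\lesssim L_{k+1}/L_k$. For $\ell\le\ell^{*}$ the maximum in the exponent equals $P_g(x,y)$, so those terms sum to at most $\exp(-\zeta P_g(x,y))\sum_{\ell\le\ell^{*}}(L_{k+1})^{d\ell}\le 2\,(L_{k+1})^{d\ell^{*}}\exp(-\zeta P_g(x,y))$, which is the source of the prefactor $(L_{k+1})^{dL_{k+1}/L_k}$. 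For $\ell>\ell^{*}$ the maximum equals $\ell L_k$, and the remaining terms form the tail of a geometric series of ratio $(L_{k+1})^d\exp(-\zeta L_k)$; summing from $\ell^{*}+1$ gives
\begin{equation}
\frac{\big((L_{k+1})^d\exp(-\zeta L_k)\big)^{\ell^{*}+1}}{1-(L_{k+1})^d\exp(-\zeta L_k)}\le\frac{1}{1-(L_{k+1})^d\exp(-\zeta L_k)}(L_{k+1})^{d\ell^{*}}\exp(-\zeta P_g(x,y)),
\end{equation}
and both regimes are then bounded by the right-hand side of the Proposition.

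The only genuine obstacle is to check that this geometric series converges, i.e.\ that $(L_{k+1})^d\exp(-\zeta L_k)<1$, which is exactly where the choice of scales $L_k=L_0^{\alpha^{k}}$ and the inductive hypothesis $\zeta>20/\sqrt{L_k}$ of Proposition~\ref{Pscale} enter: then $\zeta L_k>20\sqrt{L_k}=20\,L_0^{\alpha^{k}/2}$, whereas $d\ln(L_{k+1})\sim d\,\alpha^{k+1}\ln L_0$, so $\zeta L_k-d\ln(L_{k+1})\to+\infty$ once $L_0$ is large enough (the double-exponential $L_0^{\alpha^{k}}$ beats the simple-exponential $\alpha^{k+1}$). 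The same margin absorbs the extra factors $2C(2jL_k)^d\exp(\sqrt{jL_k})$ produced by \eqref{EAlgo2}: since $j\le20$ one has $\sqrt{jL_k}\le\sqrt{20L_k}\ll\zeta L_k$, so after swallowing the polynomial and the $\exp(\sqrt{jL_k})$ prefactor such a step still decays at least like $\exp(-\zeta L_k)$ — if one wants to be scrupulous, at the cost of decreasing $\zeta$ by a tiny amount that remains $>20/\sqrt{L_{k+1}}$, which is precisely what lets the induction on $k$ in Proposition~\ref{Pscale} close. Everything else is the elementary two-regime estimate above.
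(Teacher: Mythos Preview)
Your proof is correct and follows essentially the same route as the paper: group the chains by length, bound the number of chains of length $\ell$ by $(L_{k+1})^{d\ell}$, invoke the per-chain estimate $R_{\mathcal{X}}\le\exp\big(-\zeta\max(\ell L_k,P_g(x,y))\big)$, and sum. The only cosmetic difference is that you split the sum at $\ell^{*}=P_g(x,y)/L_k$ whereas the paper splits at $L_{k+1}/L_k$ via the inequality $\max(NL_k,P_g(x,y))\ge P_g(x,y)+\max(0,N-L_{k+1}/L_k)L_k$; since $P_g(x,y)\le L_{k+1}$ these give the same bound, and your extra remarks on the convergence of the geometric series and on absorbing the constants from \eqref{EAlgo2} are things the paper leaves implicit.
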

\begin{proof}
We have
\begin{eqnarray*}
\sum_{(\mathcal{X})} R_\mathcal{X} & =& \sum_{N=1}^\infty \sum_{\mathcal{X},N=\textrm{length}(\mathcal{X})} R_\mathcal{X} \\
 & \leq & \sum_{N=1}^\infty \sum_{\mathcal{X},N=\textrm{length}(\mathcal{X})} \exp(-\zeta max(N L_k, P_g(x,y)) \\
 & \leq & \sum_{N=1}^\infty \sum_{\mathcal{X},N=\textrm{length}(\mathcal{X})} \exp(-\zeta (P_g(x,y)+max(0,N-\frac{L_{k+1}}{L_k})L_k) \\
 & \leq & (L_{k+1})^{d \frac{L_{k+1}}{L_k}} \exp(-\zeta (P_g(x,y) \sum_{N=0}^\infty L_{k+1}^{Nd} \exp(-\zeta N L_k) \\
 & \leq & \frac{1}{1-(L_{k+1})^{d} \exp(-\zeta L_k)} (L_{k+1})^{d \frac{L_{k+1}}{L_k}} \exp(-\zeta (P_g(x,y)),
\end{eqnarray*}
because at each step $i$ there are only $L_{k+1}^d$ possible $v_i$'s.
\end{proof}

The fact that there are only 3 bad boxes implies that the good path length is close to the usual distance $P_g(x,y) \geq |x-y|-20L_k $. We can now conclude. Let $y$ be outside $\Lambda_{L_{k+1}}$. Then
\begin{align*}
& \sum_v |(\hat{H}(n,{L+1})+A_c-\lambda)^{-1}(x,v)||\Gamma(v,y)| \\ & \qquad \leq  \sum_v ||((\hat{H}(n,{L+1})+A_c-\lambda)^{-1}||.\frac{1}{1-1-(L_{k+1})^{d} \exp(-\zeta L_k)} (L_{k+1})^{L_{k+1}/L_{k}} \times \\ & \qquad \qquad\times \exp(-\zeta (|x-v|-20))C \exp(-\nu |v-y|) \\
 & \qquad \leq  C'(L_{k+1})^d\exp(\log(L_k)L_k^{\alpha-1} ) \exp(\sqrt{L_{k+1}})\exp(-\zeta (|x-y|-20)
\end{align*}
In order to conclude, we just remark that $\sqrt{L_k}+\log(L_k)L_k^{\alpha-1}+d \log (L_k)=o(|x-y|)$. So we can choose
\begin{equation}
\zeta_{k+1}=\zeta_k-\frac{\sqrt{L_k}+\log(L_k)L^{\alpha-1}+d \log (L_k)}{L_k},
\end{equation}
which finishes the argument.
\end{proof}

\subsection{The multiscale}

\begin{Theo}Assume that there exists a gap big enough in the spectrum, that the law of potential has a density Lipschitz by part and that $||W||_{L^1}$ is small enough compared to the gap. Then, let $L_0$ be large enough, $\lambda\in \mathbb{R}$ $\zeta >1/\sqrt{L_0}$, $p>2d$ and $ 1< \alpha < 2p/(p+2d)$. If for any cubes $\Lambda_{L_0}(n_0)$, $\Lambda_{L_0}(m_0)$ separated by at least $2 L_0$,

\begin{equation}
\mathbb{P}\big(\exists \lambda\in I : \mbox{$\Lambda_{L_0}(n)$ and $\Lambda_{L_0}(n)$ are not $(L_0,\zeta, \lambda$)-good } \big) \leq \frac{1}{L_0^{2p}},
\end{equation}
then for any $k$, and any cubes $\Lambda_{L_k}(n_k)$, $\Lambda_{L_k}(m_k)$ separated by at least $2 L_k$, we  have
\begin{equation}
\mathbb{P}\big(\exists \lambda\in I : \mbox{ $\Lambda_{L_k}(n_k)$ and $\Lambda_{L_k}(m_k)$ are not $(L_k,\zeta, \lambda$) good }\big) \leq \frac{1}{L_k^{2p}},
\end{equation}
where $L_{k+1}=L_k^\alpha$
\end{Theo}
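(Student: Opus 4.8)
The plan is a straightforward induction on $k$, following the scheme of \cite{disertori2008random}: the base case $k=0$ is the hypothesis, and the inductive step passes from scale $L_k$ to $L_{k+1}=L_k^\alpha$ with Proposition \ref{Pscale} as the deterministic input and Theorem \ref{TheoWegner} together with the induction hypothesis as the probabilistic inputs. Fix two boxes $\Lambda_{L_{k+1}}(n_{k+1})$, $\Lambda_{L_{k+1}}(m_{k+1})$ separated by at least $2L_{k+1}$. By the contrapositive of Proposition \ref{Pscale}, if $\Lambda_{L_{k+1}}(n)$ is not $(L_{k+1},\zeta_{k+1},\lambda)$-good then at that $\lambda$ at least one of the following holds: (i) it contains four sub-boxes of side $L_k$, pairwise separated by at least $2L_k$, none of which is $(L_k,\zeta,\lambda)$-good; (ii) one of its sub-boxes of side $4L_k$, $12L_k$ or $20L_k$ is $L_k$-resonant; (iii) it is itself $L_{k+1}$-resonant. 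Consequently the event $\{\exists\lambda\in I:\text{both boxes not }(L_{k+1},\zeta_{k+1},\lambda)\text{-good}\}$ is contained in the union of (a) $\{\exists\lambda\in I:\Lambda_{L_{k+1}}(n_{k+1})\text{ satisfies (i)}\}$, (b) the same event for $m_{k+1}$, and (c) $\{\exists\lambda\in I:\text{each of the two boxes satisfies (ii) or (iii) at }\lambda\}$, since every one of the nine cross-cases has at least one of these three forms.

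For (a) (and symmetrically (b)) I would union-bound over the at most $C L_{k+1}^{4d}$ choices of a four-tuple $(B_1,B_2,B_3,B_4)$ of well-separated $L_k$-sub-boxes. For a fixed four-tuple, $\{\exists\lambda: B_1,\dots,B_4\text{ all not good}\}\subseteq\{\exists\lambda: B_1,B_2\text{ not good}\}\cap\{\exists\lambda': B_3,B_4\text{ not good}\}$, and — this is the decisive point — the modified Hamiltonians attached to $L_k$-boxes with disjoint doubled neighbourhoods are genuinely independent, because in the construction of $\hat H(\cdot,L_k)$ the random potential has been frozen outside $\Lambda_{2L_k}$; this is exactly what the operators $A_c$ in the definitions of resonance and good box are there to absorb, as justified by Theorem \ref{TheoLocal} and \eqref{EPetiteDifference}. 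Hence each of these two events has probability $\le L_k^{-2p}$ by the induction hypothesis and their intersection has probability $\le L_k^{-4p}$, so (a) has probability $\le C L_{k+1}^{4d}L_k^{-4p}=C L_k^{4d\alpha-4p}$. The hypothesis $1<\alpha<2p/(p+2d)$ — a nonempty range precisely because $p>2d$ — is equivalent to $4d\alpha-4p<-2\alpha p$, so this is $\ll L_{k+1}^{-2p}$ once $L_0$ is large.

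For (c) I would condition on the configuration in $\Lambda_{2L_{k+1}}(n_{k+1})$. Its resonances of type (ii)–(iii) then confine $\lambda$ to a union of at most $C L_{k+1}^d$ intervals of length at most $2\exp(-\sqrt{L_k})$, the coarsest resonance threshold among the scales $4L_k,12L_k,20L_k,L_{k+1}$. The sub-box Hamiltonians of $\Lambda_{L_{k+1}}(m_{k+1})$ depend only on the configuration near $m_{k+1}$, disjoint from the conditioned region, hence are independent of it, and Theorem \ref{TheoWegner} — which applies verbatim to these restricted mean-field operators since its proof is purely local — bounds the probability that any of the $\le C L_{k+1}^d$ relevant sub-boxes of $\Lambda_{L_{k+1}}(m_{k+1})$ is resonant inside any of those intervals by $\le C L_{k+1}^d\sqrt{\exp(-\sqrt{L_k})}$. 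Thus (c) has probability $\le C L_{k+1}^{3d}\exp(-\tfrac12\sqrt{L_k})$, smaller than any power of $L_k$, so in particular $\ll L_{k+1}^{-2p}$.

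Summing the three contributions gives $\mathbb{P}(\exists\lambda\in I:\text{both not good})<L_{k+1}^{-2p}$, while the decay parameter $\zeta_{k+1}=\zeta_k-(\sqrt{L_k}+\log(L_k)L_k^{\alpha-1}+d\log L_k)/L_k$ furnished by Proposition \ref{Pscale} stays above $1/\sqrt{L_{k+1}}$ for all $k$ once $L_0$ is large, since the series $\sum_k(\sqrt{L_k}+\log(L_k)L_k^{\alpha-1}+d\log L_k)/L_k$ converges and can be made arbitrarily small; this closes the induction. The main obstacle is the bookkeeping in step (a): keeping the entropy factor $L_{k+1}^{4d}$ beaten by $L_k^{-4p}$ is exactly what pins down the admissible range of $\alpha$, and one must be careful that the four doubled neighbourhoods $\Lambda_{2L_k}(B_i)$ are genuinely pairwise disjoint so that the two pair-events factorize — everything else is a routine application of Proposition \ref{Pscale}, Theorem \ref{TheoWegner}, and the induction hypothesis.
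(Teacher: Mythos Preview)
Your proof is correct and follows essentially the same route as the paper's: induction on the scale using Proposition~\ref{Pscale} as the deterministic input, with the two-pair factorisation for the four-bad-subbox event and the Wegner estimate for the double-resonance event. Your write-up is in fact more detailed than the paper's in two places --- you spell out the conditioning argument behind the double-resonance bound (c) and you verify that the decay parameter $\zeta_k$ remains above $1/\sqrt{L_k}$ --- whereas the paper dispatches both in a single sentence. One cosmetic point: in (c) the number of resonance intervals contributed by the first big box is of order $L_{k+1}^d \cdot L_k^d$ rather than $L_{k+1}^d$ (each sub-box carries up to its volume in eigenvalues), but this is still polynomial and is crushed by the $\exp(-\tfrac12\sqrt{L_k})$ factor, so the conclusion is unaffected.
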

This theorem is similar to Theorem 10.22 of \cite{disertori2008random}.

\begin{proof} The demonstration is done by iteration using Proposition \ref{Pscale}. Suppose there exists $\lambda \in I$ such that $\Lambda_{L_{k+1}}(n)$ and $\Lambda_{L_{k+1}}(m)$ are not $(L_{k+1},\zeta,\lambda)$-good. Then for each box one of the hypothesis of Proposition \ref{Pscale} fails. So either one of the boxes admits 4 separated bad sub-boxes, or there exists $\lambda$ such that for the two boxes, one of their sub-boxes shows a resonance at $\lambda$. The probability that the hypothesis over the existence of 4 bad boxes is not true can be estimated by iteration. Indeed, 4 cubes means 2 pairs. Because of independence, the 4-cubes probability will be the 2-cubes probability squared and because there are only $L^d$ cubes, hence $L^{4d}$ 4 cubes combinations we can estimate this probability by
\begin{eqnarray*}
&\leq & C L^{4d} \left(\frac{1}{(L_k)^{2p}}\right)^2 \\
&\leq  &C \frac{1}{(L_{k+1})^{4\frac{p}{\alpha}-4d}}\\
&\leq &\frac{1}{4} \frac{1}{L_{k+1}^{2p}}.
\end{eqnarray*}
The last inequality is true for $L_k$ big enough because $2p < 4\frac{p}{\alpha}-4d$ so $\alpha < \frac{2p}{p+2d}$.

The probability of the non resonance hypothesises is controlled by Wegner estimate (\ref{EWegner}) as it is done Theorem 10.22 of \cite{disertori2008random} with $\mathcal{O}(L^d\sqrt{e^{-\sqrt{L_k}}})= o (L^{-2p})$.
\end{proof}

From this and (\ref{EWegner}) we can deduce the following corollaries. The proof can be found again in \cite[Part 9 and Part 11]{disertori2008random} . 

\begin{Cor}In presence of strong disorder, meaning \begin{equation}
\Big(|\text{supp}(\rho)|^{-1/2}+||\rho||_\infty | \text{supp}(\rho)|^{1/2}+ ||\rho'||_\infty |\text{supp}(\rho)|^{3/2}\Big) \nonumber
\end{equation}
small enough, then $H_{min}$ has pure point spectrum and its eigenvectors are localised in space. 
\end{Cor}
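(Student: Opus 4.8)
The plan is to deduce this statement from the multiscale theorem just established; the only genuinely new input is the verification of its initial length-scale hypothesis in the strong-disorder regime, after which the passage to pure point spectrum is the classical one. Since $H_{min}$ (for $\Lambda=\mathbb{Z}^d$) is a bounded, self-adjoint, space-ergodic operator with off-diagonal exponential decay, its almost-sure spectrum is a fixed bounded set, so I would fix once and for all a bounded interval $I\supset\sigma(H_{min})$ and prove localisation on $I$. I would then fix $p$ large and $1<\alpha<3/2$ inside the admissible range $1<\alpha<2p/(p+2d)$, and $L_0$ large enough that all the ``$L_0$ large'' requirements of the multiscale theorem and of Proposition~\ref{Pscale} are met and that one may pick an initial decay rate $\zeta_0\in(1/\sqrt{L_0},\nu)$, say $\zeta_0=2/\sqrt{L_0}$, for which the rates $\zeta_k$ produced along the scales $L_{k+1}=L_k^{\alpha}$ stay bounded below by some $\zeta_\infty>0$ (possible since the corrections $(\sqrt{L_k}+\log L_k\,L_k^{\alpha-1}+d\log L_k)/L_k$ are summable with sum $<\zeta_0$ for $L_0$ large).

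The heart of the proof is the initial estimate: for any two cubes $\Lambda_{L_0}(n_0),\Lambda_{L_0}(m_0)$ separated by at least $2L_0$,
\begin{equation*}
\mathbb{P}\big(\exists\,\lambda\in I:\ \Lambda_{L_0}(n_0)\ \text{and}\ \Lambda_{L_0}(m_0)\ \text{are not}\ (L_0,\zeta_0,\lambda)\text{-good}\big)\le L_0^{-2p}
\end{equation*}
as soon as the disorder is strong enough. The structural point is that the truncated operators $\hat H(n_0,L_0)$ and $\hat H(m_0,L_0)$ depend on disjoint blocks of the random variables --- this is exactly what the truncation of $V_\omega$ outside $\Lambda_{2L_0}$ achieves --- so they are genuinely independent; Theorem~\ref{TheoLocal}, through \eqref{EPetiteDifference}, is used only to compare them with the true sub-matrices of $H_{min}$, and this discrepancy is precisely what the auxiliary perturbation $A_c$ of size $2De^{-\nu L_0}$ absorbs in the definition of a good box. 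A box $\Lambda_{L_0}(n)$ can fail to be $(L_0,\zeta_0,\lambda)$-good only if it is $L_0$-resonant or if some diagonal entry of $\hat H(n,L_0)$ lies within a threshold $\theta=O(1)$ of $\lambda$ --- $\theta$ chosen so that, off this event, a geometric resolvent expansion around the large diagonal gives off-diagonal decay at a rate comfortably above $\zeta_0$ while keeping $\lambda$ at distance more than $e^{-\sqrt{L_0}}$ from the spectrum. I would bound these two failure modes separately: at a fixed $\lambda$ the resonance probability is controlled, after replacing $\hat H(n,L_0)$ by $(H_{min})_{|\Lambda_{L_0}(n)}$ via \eqref{EPetiteDifference}, by the Wegner estimate of Theorem~\ref{TheoWegner} with $\epsilon\sim e^{-\sqrt{L_0}}$, giving a bound of order $|\Lambda_{L_0}|\,e^{-\sqrt{L_0}/2}\,\eta$; the ``diagonal-close'' probability is of order $|\Lambda_{L_0}|\,\|\rho\|_\infty$, using that each diagonal entry is an increasing function of the corresponding $v_i$ with derivative $\ge1/2$ (the computation behind \eqref{Eeigcroissant}, valid because $\|W\|_{\ell^1}$ is small) and that $\|\rho\|_\infty\le\eta^2$ when $\eta:=|\supp(\rho)|^{-1/2}+\|\rho\|_\infty|\supp(\rho)|^{1/2}+\|\rho'\|_\infty|\supp(\rho)|^{3/2}$ is small. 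Since the ``bad-energy'' set of each box is a union of at most $|\Lambda_{L_0}|$ intervals, intersecting those of the two independent boxes and conditioning on one of them bounds the displayed probability by $C_{L_0}\,\eta$, with $C_{L_0}$ depending only on $L_0$; for the already-fixed $L_0$ this is $\le L_0^{-2p}$ once $\eta$ is small enough, which is precisely the strong-disorder hypothesis.

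Granting the initial estimate, the multiscale theorem propagates it to every scale, giving $\mathbb{P}(\exists\lambda\in I:\Lambda_{L_k}(n_k)\text{ and }\Lambda_{L_k}(m_k)\text{ are not }(L_k,\zeta,\lambda)\text{-good})\le L_k^{-2p}$ for all $k$ and all separated pairs, with decay rate $\ge\zeta_\infty>0$. From here the argument is classical and I would simply quote it from \cite[Parts~9 and 11]{disertori2008random}: a Borel--Cantelli argument using the fast decay of these probabilities, together with the Wegner estimate of Theorem~\ref{TheoWegner} at every energy of $I$, shows that almost surely $H_{min}$ has pure point spectrum in $I$ and that its eigenvectors decay exponentially in space at rate at least $\zeta_\infty/2$; since $I\supset\sigma(H_{min})$, this covers the whole spectrum. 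The step I expect to require the most care is the initial estimate --- disentangling the two finite-volume operators in spite of the nonlinearity and controlling the ``diagonal-close'' event --- since everything downstream is a direct transcription of the linear multiscale machinery.
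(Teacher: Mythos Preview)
Your approach is aligned with the paper's: both deduce the corollary from the multiscale theorem together with the Wegner estimate~\eqref{EWegner}, and both defer the passage from the multiscale conclusions to pure point spectrum and eigenfunction decay to \cite[Parts~9 and~11]{disertori2008random}. The paper's own ``proof'' is in fact just this citation, with no further argument given.

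Where you go beyond the paper is in spelling out the initial-scale estimate in the strong-disorder regime. This is the one place the smallness of $\eta$ is actually used, and your mechanism---diagonal dominance plus a Neumann expansion when no diagonal entry of $\hat H(n,L_0)$ lies near $\lambda$, combined with the Wegner bound for resonances---is the standard one from \cite[Part~11]{disertori2008random}, adapted here to the nonlinear setting. The adaptation is sound: the monotonicity of a diagonal entry in its own site variable $v_x$ follows from Theorem~\ref{TheoLocal} (a perturbation at $x$ changes $A_{\textrm{eff}}(\gamma_{min})(x,x)$ by at most $2\|W\|_{\ell^1}C$, so the derivative in $v_x$ stays $\ge 1/2$), and your observation $\|\rho\|_\infty\le\eta^2$ is correct. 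The uniformity-in-$\lambda$ step via conditioning on one box and exploiting independence of $\hat H(n_0,L_0)$ and $\hat H(m_0,L_0)$ is sketched rather than proved, but it is the standard device and would go through as in the linear case. In short, you have filled in a step the paper leaves entirely to the reference; nothing is wrong, and nothing is genuinely different in strategy.
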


Furthermore the Lifshitz tail is not modified too much for $||W||$ very small because  
\begin{equation}
\mathbb{P}\big[d(\sigma[(H_{min})_{|\Lambda}], \lambda)<\epsilon\big] \leq 
\mathbb{P}\big[d(\sigma[H^\Lambda], \lambda)<\epsilon + 2||W||_{\ell^1} \big],
\end{equation}
so we also get the following 
\begin{Cor}
There is $\epsilon$ such that if $||W||_{\ell^1}<\epsilon$ there are small intervals at the edges of the bands of the spectrum of $H_{min}$, where the spectrum is pure point with exponentially decaying eigenvectors.
\end{Cor}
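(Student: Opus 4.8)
The plan is to run the multiscale analysis of the previous subsection; the only genuinely new input is the \emph{initial length scale estimate}, which at a band edge is a Lifshitz-tail statement. The inductive step (Proposition \ref{Pscale}) and the probabilistic estimate feeding it (the Wegner bound \eqref{EWegner} of Theorem \ref{TheoWegner}) are already available for $H_{min}$ as soon as $||W||_{\ell^1}<aG$, so everything reduces to verifying the hypothesis of the multiscale theorem of the previous subsection at a single scale $L_0$ and on a small interval $I$ sitting at the edge of a band of $\sigma(H_{min})$.

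First I would fix an edge $E$ of a band of $\sigma(H)=\sigma(-\Delta+V_0)+\supp(\mathbb{P})$. Since $||A_{\textrm{eff}}(\gamma_{min})||\leq 2||W||_{\ell^1}$ one has $\sigma(H_{min})\subset\{d(\cdot,\sigma(H))\leq 2||W||_{\ell^1}\}$, so the corresponding edge $E_{min}$ of $\sigma(H_{min})$ lies within $2||W||_{\ell^1}$ of $E$. For the linear operator $H^\Lambda=-\Delta+V$ the classical Lifshitz-tail estimate near the band edge (see \cite[Part 11]{disertori2008random}) provides, for a suitable $L_0$ and a small interval $J$ around $E$, the bound $\mathbb{P}(\exists\lambda\in J:\Lambda_{L_0}(n),\Lambda_{L_0}(m)\text{ not good for }H^\Lambda)\leq L_0^{-2p}$ with decay $\zeta>1/\sqrt{L_0}$, for cubes separated by $2L_0$. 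I would then transfer this to $H_{min}$: inside $\Lambda_{L_0}(n)\subset\Lambda_{2L_0}(n)$ one has $\hat H(n,L_0)=H^{\Lambda_{L_0}(n)}+(A_{\textrm{eff}})_{|\Lambda_{L_0}(n)}$ with $||(A_{\textrm{eff}})_{|\Lambda_{L_0}(n)}||\leq 2||W||_{\ell^1}$, and the auxiliary operator $A_c$ in the definition of a good box satisfies $||A_c||\leq 2De^{-\nu L_0}$, so $\hat H(n,L_0)+A_c=H^{\Lambda_{L_0}(n)}+B_0$ with $||B_0||\leq 2||W||_{\ell^1}+2De^{-\nu L_0}$. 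Using the elementary inequality noted in the text,
\begin{equation*}
\mathbb{P}\big[d(\sigma[(H_{min})_{|\Lambda}],\lambda)<\epsilon\big]\leq\mathbb{P}\big[d(\sigma[H^\Lambda],\lambda)<\epsilon+2||W||_{\ell^1}\big],
\end{equation*}
together with the fact that $\hat H(n,L_0)+A_c$ still has exponential off-diagonal decay at a rate depending only on $W$ and the gap, one checks that, for $||W||_{\ell^1}<\epsilon$ small enough and $I\subset J$ a slightly smaller band-edge interval around $E_{min}$, the resonance condition (controlled probabilistically by \eqref{EWegner}) and the Combes--Thomas decay condition defining an $(L_0,\zeta,\lambda)$-good box for $H_{min}$ follow, up to an event of probability $\leq L_0^{-2p}$, from the corresponding statement for $H^\Lambda$ on $J$. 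Thus the hypothesis of the multiscale theorem holds for $H_{min}$ on $I$ at scale $L_0$.

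Then I would iterate: that theorem (whose inductive step is Proposition \ref{Pscale} and whose scale-by-scale probabilistic inputs are the Wegner estimate \eqref{EWegner} and the recursive control of multi-box events) yields, for every $k$ and every pair $\Lambda_{L_k}(n_k),\Lambda_{L_k}(m_k)$ separated by $2L_k$,
\begin{equation*}
\mathbb{P}\big(\exists\lambda\in I:\ \Lambda_{L_k}(n_k)\text{ and }\Lambda_{L_k}(m_k)\text{ are not }(L_k,\zeta,\lambda)\text{-good}\big)\leq L_k^{-2p}.
\end{equation*}
To deduce spectral localisation on $I$ I would argue as in \cite[Part 9]{disertori2008random}: since $H_{min}$ is an exponential off-diagonal decay operator, the Schur-complement identity \eqref{hufre} together with the locality bound \eqref{EPetiteDifference} upgrades the good-box decay of finite-volume resolvents to exponential decay of $(H_{min}-\lambda)^{-1}(x,y)$ for $\lambda\in I$ off a null set, whence the standard polynomially-bounded-eigenfunction argument gives that the spectrum of $H_{min}$ in $I$ is almost surely pure point with exponentially decaying eigenfunctions. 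Running this at the edge of each band of $\sigma(H_{min})$ — which is a small deformation of $\sigma(-\Delta+V_0)+\supp(\mathbb{P})$ and hence still has band--gap structure when $||W||_{\ell^1}$ is small — proves the corollary.

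I expect the main obstacle to be the transfer of the initial estimate in the second paragraph: the nonlinear term $A_{\textrm{eff}}(\gamma_{min})$ is not explicit, so it must be handled purely through the a priori bounds $||A_{\textrm{eff}}||\leq 2||W||_{\ell^1}$ and its Combes--Thomas decay, and one must verify that both the length of the Lifshitz-tail interval and the decay rate $\zeta>1/\sqrt{L_0}$ genuinely survive this $O(||W||_{\ell^1})$, configuration-dependent perturbation. Once this is secured, the remaining steps are the standard multiscale bookkeeping of \cite{disertori2008random}.
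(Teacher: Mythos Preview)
Your proposal is correct and follows essentially the same approach as the paper: the paper's argument is the single observation that the Lifshitz tail survives via the perturbation inequality $\mathbb{P}\big[d(\sigma[(H_{min})_{|\Lambda}],\lambda)<\epsilon\big]\leq\mathbb{P}\big[d(\sigma[H^\Lambda],\lambda)<\epsilon+2||W||_{\ell^1}\big]$, after which the rest is deferred to \cite[Parts 9 and 11]{disertori2008random}. You spell out in more detail the transfer of the initial-scale estimate (resonance plus Combes--Thomas decay) and the subsequent iteration, but the key mechanism is identical.
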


The two results conclude the proof of Theorem \ref{TheoMain}.
\section{Numerical simulations}

\label{Snumeric}
In this section, we present some simple numerical simulations in order to illustrate our theorems. Due to the computational cost we restrict ourselves to the one dimensional case, which however is known to present stronger localisation effects than in higher dimensions. It would be interesting to generalise our simulation to dimension 2 and 3. 

We take $\Lambda=[0,L]$ with $L$ ranging from two hundred to a few thousands lattice sites, $200 \leq L \leq 2000$. The discrete one-dimensional Laplacian is defined in \eqref{Laplacian}. The deterministic potential $V_0$ is 2-periodic: 
\begin{equation}
V_0(n)= \begin{cases} \xi & \text{if $n$ is even} \\ -\xi & \text{if $n$ is odd,} \end{cases}
\end{equation}
where $\xi>0$ is a parameter. The probability of the random potential $V_\omega$ is the uniform law over the interval $[0,\zeta]$ where $\zeta$ is another parameter. For $W(x-y)$, we use a simple next-to-nearest neighbour interaction of the form 
\begin{equation}
 W(x-y)= \begin{cases} q & \text{if $x-y=0$} \\ q/2 & \text{if $|x-y|=1$} \\ q/4  &\text{if $|x-y|=2$ or $3$} \\ 0 & \text{otherwise,}   \end{cases}
 \end{equation} 
where $q$ is another parameter. Our model depends therefore on three parameters $\xi$, $\zeta$ and $q$.
When $\xi>2+\zeta$ the spectrum of the linear Hamiltonian $-\Delta+V$ is composed of two distinct intervals. We then choose $q$ such as to keep a gap in the spectrum and ensure that the map $F$ is contracting (Theorem \ref{TheoExistence}). In this model the particles fill half of the energy states, that is, there are $N=L/2$ particles.  

\subsection{Illustration of Theorems \ref{TheoExistence}, \ref{TheoLocal}, \ref{TheoWegner} and \ref{TheoMain} }

In order to construct the solution $\gamma_{min}$ and the associated mean-field Hamiltonian $H_{min}$, we use the fixed point algorithm employed in the proof of Theorem \ref{TheoExistence}. In Figure \ref{FigureDecroit} we display and confirm the exponential decrease of $||F^{n+1}(\gamma_0)-F^{n}(\gamma_0)||$ with initial condition $\gamma_0=\mathds{1}_{\leq \mu}(-\Delta+V)$.
\begin{figure} [p] 
\centering  
   \includegraphics[width=9cm]{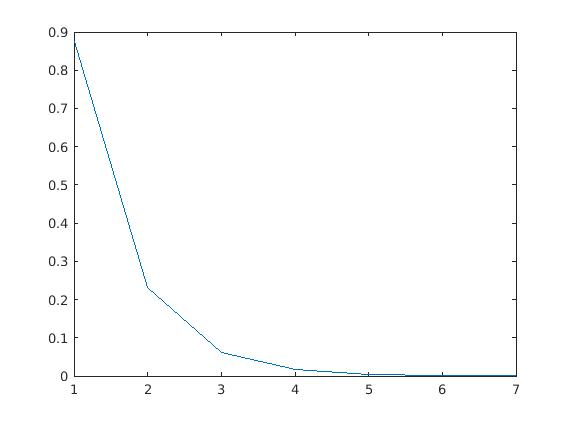} 
   \caption{Values of $||F^{n+1}(\gamma_0)-F^{n}(\gamma_0)||$ with $\gamma_0=\mathds{1}_{\leq \mu}(-\Delta+V)$ and $\xi= 1$, $\zeta= 1$ $q = 2$, $L=500$. \label{FigureDecroit}}
\end{figure}

Next we have tested that adding a dirac at the site 250 to the potential $V$ induces a perturbation in the non linear minimiser, which decays exponentially fast (Theorem \ref{TheoLocal}). In Figure \ref{FigurePertu} we plot the relative density $ \gamma_{min}(V+\delta_{250})(x,x)-\gamma_{min}(V)(x,x)$. 
\begin{figure}[p]
\centering
\includegraphics[width=9cm]{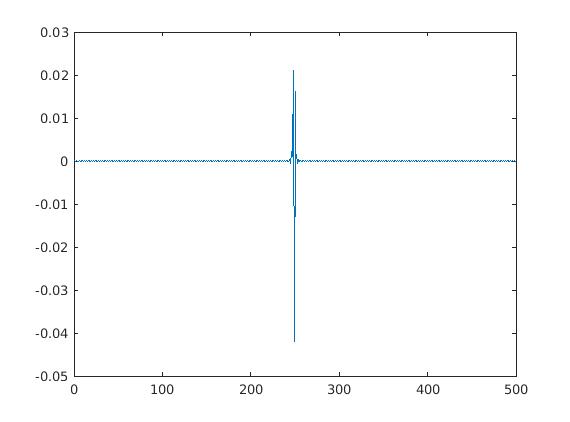} 
\caption{The relative density $ \gamma_{min}(V+\delta_{250})(x,x)-\gamma_{min}(V)(x,x)$ with $\xi= 1$, $\zeta= 1$ $q = 2$, $L=500$. \label{FigurePertu}}
\end{figure}

In Figure \ref{FigureDensi} we have tested the Wegner-type estimate of Theorem  \ref{TheoWegner}, where we have obtained a bound in terms of $\sqrt{\epsilon}$ instead of the usual $\epsilon$ that can be found in the literature \cite[Prop.VIII.4.11]{carmona2012spectral}. In dimension 1, we observe that the usual bound should hold in the nonlinear case but we are unable to prove it so far. 
\begin{figure}[p]
\centering
\includegraphics[width=9cm]{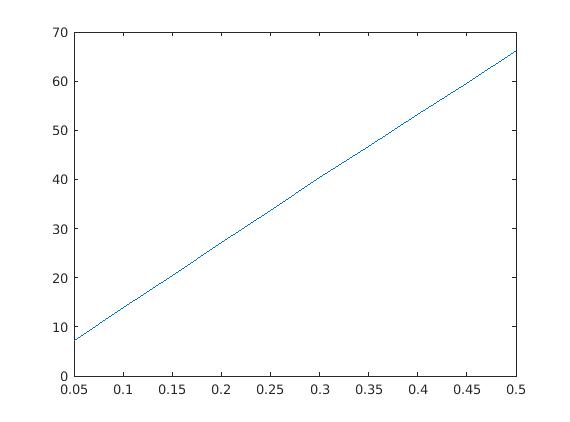}
\caption{Average number of eigenvalues of $H_{min}$ in an interval $I$ according to the size of $I$. We plot $\mathbb{E}(\Tr(\mathds{1}_{[2;2+\epsilon]}(H_{min})))$ in terms of $\epsilon$. The values of the parameters are $\xi=1$, $\zeta=1$ and $q=2$.  \label{FigureDensi}}
\end{figure}

We conclude the illustration of our results with the case of Theorem \ref{TheoMain}. Since we deal with a one-dimensional system all the eigenvectors are localised, even in a regime of parameters which is not covered by the second part of Theorem $\ref{TheoMain}$. This is shown in Figure \ref{FigureLoca}. It is an interesting open problem to prove a stronger localisation result in the one-dimensional Hartree-Fock model.
\begin{figure}[p]
\centering
\includegraphics[width=7cm]{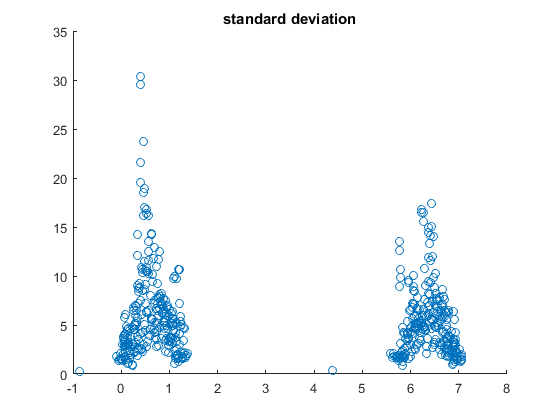}
   \includegraphics[width=7cm]{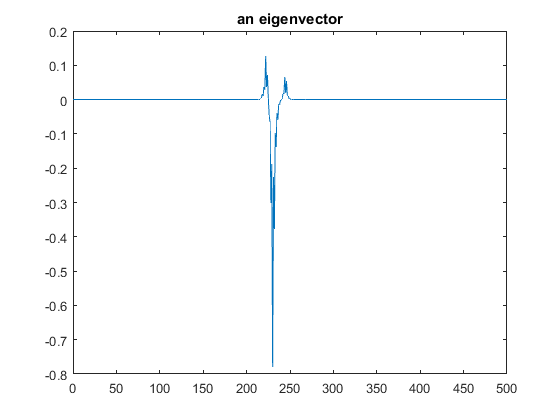} 
   \caption{\label{FigureLoca} Left: Standard deviation of all the eigenvectors of the mean field operator $H_{min}$ in terms of theirs eigenvalue. Here the size of the domain is $L=1000$, hence a value of 10 shows localisation. Right: an eigenvector chosen at random. The values of the parameters are $\xi= 1$, $\zeta= 1$ and $q = 2$.}
\end{figure}    

\subsection{Closing the gap: insulators and metals}

In Figure \ref{FigureClosing2} and \ref{FigureClosing7} we have increased the intensity $q$ of the interaction up to the point where the gap closes. For a large interaction the fixed point algorithm used to construct the solution in Theorem \ref{TheoExistence} does not work. Instead we have used the optimal damping algorithm of \cite{CanBri-00a}  which works perfectly. In general, we observe that the eigenvectors are less localised except at the edges of the spectrum. There is no sign of a phase transition. 

In Figure \ref{FigureSansStructure}, we have erased the gap by choosing $\xi=0$. A small delocalisation phenomenon seems to appear at the Fermi energy $\mu$. 

We hope to be able to understand these phenomena rigorously  in the future.  
\begin{figure} [p] 
\centering  
   \includegraphics[width=10cm]{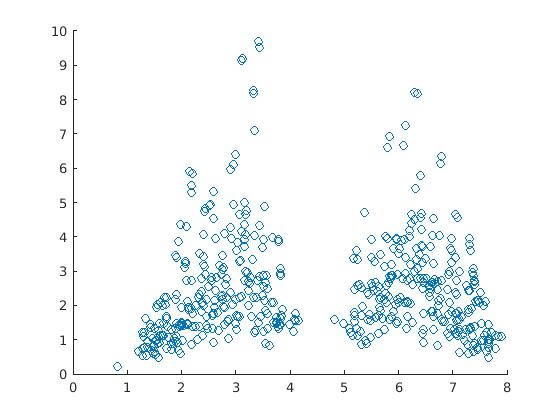} 
\caption{Standard deviation of the eigenvectors in terms of their eigenvalue, for $\xi= 2$, $\zeta= 3$ and $q = 2$.  \label{FigureClosing2}}
\end{figure}

\begin{figure} [p] 
\centering  
   \includegraphics[width=10cm]{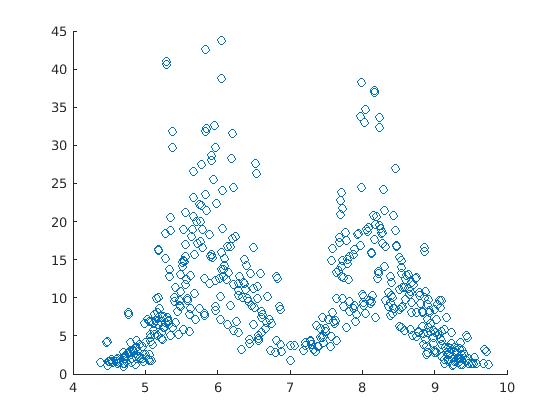} 
	\caption{Standard deviation of the eigenvectors in terms of their eigenvalue, for $\xi= 2$, $\zeta= 3$ and $q = 7$.  \label{FigureClosing7}}
\end{figure} 

\begin{figure} [p] 
\centering  
   \includegraphics[width=10cm]{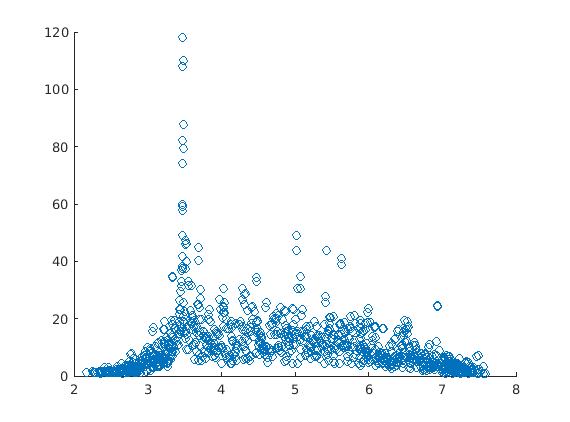} 
	\caption{Standard deviation of the nonlinear eigenvectors in terms of their eigenvalue, with the periodic potential dropped out ($\xi= 0$). Here there are only $L/4$ particles. The Fermi level is $\mu=3.5$ and there is no gap. A small delocalisation seems to appear at the Fermi level. The other parameters are $\zeta= 4$ and $q = 4$.  \label{FigureSansStructure}}
\end{figure} 
 
\subsection{The influence of the periodic potential $V_0$}

The periodic potential seems to have an influence on the localisation phenomena even in the linear case. Our Figure \ref{FigurePeriDep} clearly illustrates that the periodic potential favours localisation. To our knowledge there are very few mathematical results about how a small random potential influences a highly varying periodic system. It is however an important question if we think of the absence of conductivity in ionic crystals. 

\begin{figure}[p]
\centering
\includegraphics[scale=0.5]{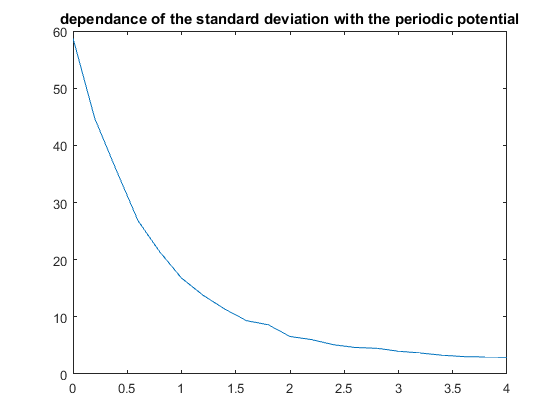}
\caption{Mean of the standard deviation for the eigenvectors of the linear problem ($q=0$) according to the strength $\xi=0,..,4$ of the periodic potential, with a fixed random intensity $\zeta=2$. \label{FigurePeriDep}}
\end{figure}

\paragraph{Acknowledgement} I would like to thank Mathieu Lewin for his help and all the time he gave me during this work.

\newpage


\end{document}